\newcommand{\REF}{\mathrm{REF}}
\newcommand{\RREF}{\mathrm{RREF}}
\newcommand{\Dom}{\mathrm{Dom}}
\newcommand{\Img}{\mathrm{Img}}
\newcommand{\Ref}{\textup{-REF}}
\newcommand{\res}{{\upharpoonleft}}
\newcommand{\nats}{\mathbb{N}}
\newcommand{\N}{\nats}
\newcommand{\Vars}{\mathrm{V}}
\newcommand{\Bool}{\mathrm{B}}
\newcommand{\PTIME}{\mathrm{P}}
\newcommand{\NPTIME}{\mathrm{NP}}
\newcommand{\THREESAT}{\mathrm{3\text{-}SAT}}
\newcommand{\Ppoly}{\mathrm{P/poly}}
\newcommand{\QPTIME}{\textup{QP}}
\newcommand{\SUBEXP}{\textup{SUBEXP}}
\newcommand{\WP}{\textup{W[P]}}
\newcommand{\FPT}{\textup{FPT}}
\theoremstyle{theorem}
\newtheorem{theorem}{Theorem}
\newtheorem{lemma}[theorem]{Lemma}
\newtheorem{claim}[theorem]{Claim}
\theoremstyle{definition}
\newtheorem{example}[theorem]{Example}
\newtheorem{remark}[theorem]{Remark}
\renewcommand{\le}{\leqslant}
\renewcommand{\leq}{\leqslant}
\renewcommand{\ge}{\geqslant}
\renewcommand{\geq}{\geqslant}
\begin{document}

\title{{\bf Automating Resolution is NP-Hard}}

\author{
Albert Atserias \;\;\; and \;\;\;
Moritz M\"uller \\ Universitat Polit\`ecnica de Catalunya
}

\maketitle

\begin{abstract}
  We show that the problem of finding a Resolution refutation that is
  at most polynomially longer than a shortest one is NP-hard. In the
  parlance of proof complexity, Resolution is not automatizable unless
  P = NP.  Indeed, we show that it is NP-hard to distinguish between
  formulas that have Resolution refutations of polynomial length and
  those that do not have subexponential length refutations. This also
  implies that Resolution is not automatizable in subexponential time
  or quasi-polynomial time unless~NP is included in SUBEXP or QP,
  respectively. 
\end{abstract}

\section{Introduction}

The proof search problem for a given proof system asks, given a
tautology, to find an approximately shortest proof of it.  Clearly,
the computational complexity of such problems is of fundamental
importance for automated theorem proving. In particular, among the
proof systems for propositional logic, Resolution deserves special
attention since most modern implementations of satisfiability solvers
are based on it.

We say that the proof search problem for Resolution is solvable in
polynomial time if there is an algorithm that, given a contradictory
CNF formula~$F$ as input, outputs a Resolution refutation of~$F$ in
time polynomial in~$r+s$, where~$r$ is the size of~$F$, and~$s$ is the
length of a shortest Resolution refutation of~$F$.  More succinctly,
we say that Resolution is {\em automatizable}~\cite{bpr}. 
It is clear that the concept of automatizability applies not only to
Resolution but to any refutation or proof system, and one can ask for
automating algorithms that run in quasi-polynomial time,
subexponential time, etc..\footnote{The time of the automating
  algorithm is not measured in~$r$ but in~$r+s$ because~$s$ can be
  much larger than~$r$. We use both~$r$ and~$s$, and not just~$s$,
  because a Resolution refutation need not use all clauses in~$F$, but
  the algorithm should be given the opportunity to at least read all
  of~$F$.}

In this paper we show that Resolution is not automatizable
unless~$\PTIME=\NPTIME$. The assumption is clearly optimal
since~$\PTIME=\NPTIME$ implies that it is.  To prove our result we
give a direct and efficient reduction from~$\THREESAT$, the
satisfiability problem for 3-CNF formulas. The reduction is so
efficient that it also rules out quasi-polynomial and subexponential
time automating algorithms for Resolution under the corresponding
hardness assumptions. More precisely, let~$\QPTIME$ and~$\SUBEXP$
denote the classes of problems that are decidable in quasi-polynomial
time~$2^{(\log n)^{O(1)}}$, and in subexponential time~$2^{n^{o(1)}}$,
respectively. 
 Then our main result reads:

\begin{theorem}\label{thm:main}\
\begin{enumerate}\itemsep=0pt
\item Resolution is not automatizable in subexponential time unless $\NPTIME\subseteq\SUBEXP$. 
\item Resolution is not automatizable in quasi-polynomial time unless $\NPTIME \subseteq\QPTIME$.
\item Resolution is not automatizable in polynomial time unless $\NPTIME \subseteq\PTIME$.
\end{enumerate}
\end{theorem}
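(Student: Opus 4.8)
The plan is to give a polynomial-time reduction mapping a $3$-CNF formula $\phi$ to a contradictory CNF formula $G=G(\phi)$ such that, if $\phi$ is satisfiable then $G$ has a Resolution refutation of length $\mathrm{poly}(|\phi|)$, while if $\phi$ is unsatisfiable then every Resolution refutation of $G$ has length at least $2^{|\phi|^{\Omega(1)}}$. Granting such a reduction, all three parts of the theorem follow from one generic argument. Suppose Resolution has an automating algorithm running in time $T(r+s)$. On input $\phi$, compute $G(\phi)$, whose size is $r=\mathrm{poly}(|\phi|)$, run the automating algorithm on it for $T(\mathrm{poly}(|\phi|))$ steps using the explicit polynomial bound on the refutation length in the satisfiable case, and accept iff the algorithm has printed a string that we can verify to be a complete and correct Resolution refutation of $G(\phi)$. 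If $\phi$ is satisfiable the automating algorithm must succeed within the budget; if $\phi$ is unsatisfiable every refutation of $G(\phi)$ is too long even to be written down within the budget, so no valid output appears. Hence $\THREESAT$, and therefore every problem in $\NPTIME$, is decided in time roughly $T(\mathrm{poly}(n))$, which lies in $\PTIME$, $\QPTIME$ or $\SUBEXP$ when $T$ is polynomial, quasi-polynomial, or subexponential, respectively.

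For $G(\phi)$ we take a \emph{reflection} formula $\REF(F,s)$ whose satisfying assignments encode Resolution refutations of $F$ using at most $s$ lines; thus $\REF(F,s)$ is contradictory precisely when $F$ has no Resolution refutation with at most $s$ lines. We instantiate it at a polynomial bound $s=\mathrm{poly}(|\phi|)$ and an auxiliary formula $F=F(\phi)$ obtained by \emph{amplifying} $\phi$ so as to be equisatisfiable with $\phi$ but exponentially hard for Resolution when unsatisfiable. Concretely, with a fresh selector variable $z$, let $F(\phi)$ contain each clause of $\phi$ extended by $\negation{z}$, together with each clause of a hard pigeonhole formula (for instance the onto pigeonhole principle $\OPHP$ with $k+1$ pigeons and $k$ holes, $k=\mathrm{poly}(|\phi|)$) extended by $z$. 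Assigning $z$ the value true reduces $F(\phi)$ to $\phi$, and assigning it the value false reduces $F(\phi)$ to the pigeonhole formula; hence $F(\phi)$ is satisfiable iff $\phi$ is, and restricting any Resolution refutation of an unsatisfiable $F(\phi)$ by $z=0$ yields a Resolution refutation of the pigeonhole formula, of length at least $2^{\Omega(k)}=2^{|\phi|^{\Omega(1)}}$. Choosing $s$ safely below this threshold makes $G(\phi)=\REF(F(\phi),s)$ contradictory in \emph{both} cases, which is what the generic argument needs.

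It remains to verify the two promised behaviours of $G(\phi)$. The upper bound is the routine direction: when $F$ is satisfiable, $\REF(F,s)$ has a Resolution refutation of length $\mathrm{poly}(s,|F|)$, obtained by fixing a satisfying assignment $\alpha$ of $F$ and walking along the (unknown) alleged refutation line by line, deriving at each line $i$ the clause asserting that the encoded clause $D_i$ contains a literal satisfied by $\alpha$; the axiom-download and resolution-inference clauses of $\REF(F,s)$ propagate this invariant, and the clauses forcing the last line to be the empty clause then yield a contradiction. The main obstacle is the \emph{lower bound}: when $F$ is unsatisfiable and requires Resolution refutations of length $2^{\Omega(k)}$, one must show that $\REF(F,s)$ requires Resolution refutations of length $2^{|G(\phi)|^{\Omega(1)}}$. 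There is no black-box simulation available here; instead one reasons directly about refutations of $\REF(F,s)$, hitting a hypothetical short refutation with a carefully designed restriction — matched to partial injections on the pigeonhole coordinates, which is where the bookkeeping with $\Dom$ and $\Img$ enters — so as to pass to a restricted reflection formula $\RREF$, and then pushing the known hardness of the pigeonhole principle through this restricted formula. Making this transfer work while keeping the size of $F(\phi)$, the bound $s$, the size of $\REF(F,s)$, and the final exponent all polynomially controlled, so that the gap between $\mathrm{poly}(|\phi|)$ and $2^{|G(\phi)|^{\Omega(1)}}$ survives, is the technical heart of the proof; the remaining pieces are the bookkeeping above together with the generic automatization argument.
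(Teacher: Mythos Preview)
Your reduction template (first paragraph) and the upper-bound sketch via Pudl\'ak's walk-through-with-a-satisfying-assignment are correct and match the paper. The problem is the lower bound, and it is a genuine gap rather than missing detail.

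You take $G(\phi)=\REF(F(\phi),s)$ and hope to transfer Resolution-hardness of $F(\phi)$ (secured by the PHP gadget) to Resolution-hardness of $\REF(F(\phi),s)$. This transfer is precisely what the authors could \emph{not} carry out: the paper says explicitly that ``even after considerable time and effort, we failed to prove a Resolution length lower bound for~$\REF(F,s)$'', and that they ``bypass the issue'' by working with a different formula~$\RREF(F,s)$. So the $G$ you propose is the one the paper abandons. (Garl\'ik later confirmed the conjecture for~$\REF$, but that is separate work.) Your sketch of the mechanism is also inverted: in the paper $\RREF$ is not a \emph{restricted} reflection formula reached from~$\REF$ by a restriction; it is a \emph{relativized} one with extra variables $P[u]$ marking ``active'' line indices, and it is strictly harder than~$\REF$. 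The random restriction goes the other way, from~$\RREF$ down to~$\REF$.

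What the paper actually does: set $G(F)=\RREF(F,13n^2)$ directly --- no PHP amplification is needed, since the lower bound requires only that $F$ be \emph{unsatisfiable}, not that it be hard for Resolution. For the lower bound, a random assignment to the $P[u]$'s kills about half the indices and turns a short refutation of~$\RREF(F,t)$ into a refutation of (a copy of)~$\REF(F,s)$ of small \emph{index-width}. The technical heart is then an index-width lower bound for~$\REF(F,s)$: since $F$ is unsatisfiable, $\REF(F,s^*)$ with $s^*=2^{n+1}-1$ is satisfied by the full-tree refutation, and one builds ``conditions'' --- small partial order-respecting injections from~$[s]$ into~$[s^*]$ mapping blocks of~$[s]$ to levels of the tree --- that make~$\REF(F,s)$ indistinguishable from the satisfiable~$\REF(F,s^*)$ at bounded index-width. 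The $\Dom/\Img$ bookkeeping is for these injections between index sets, not for pigeonhole matchings. Your PHP gadget is therefore unnecessary, and the route you outline through it does not lead to the needed lower bound.
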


That Resolution is not automatizable in polynomial time has been known
under a stronger assumption from parameterized complexity theory,
using a more contrived reduction~\cite{ar}: we review the literature
below. The first two statements in Theorem~\ref{thm:main} give the
first evidence that Resolution is not automatizable in
quasi-polynomial or subexponential time. As in the third statement,
their assumptions are also optimal in that~$\NPTIME \subseteq\QPTIME$
and~$\NPTIME \subseteq \SUBEXP$ imply that~Resolution can be automated
in quasi-polynomial and subexponential time, respectively.

The main result as stated in Theorem~\ref{thm:main} is a direct
consequence of the fact, which we also prove, that the problem of
non-trivially approximating minimum proof length for Resolution is
NP-hard.  If for a CNF formula~$G$ we write~$r(G)$ for the size
of~$G$, and~$s(G)$ for the length of a shortest Resolution refutation
of~$G$, then we show: 

\begin{theorem}\label{thm:approx} 
There are reals~$c > 0$ and~$d > 0$ and a
polynomial-time computable function~$G$ that maps any~3-CNF
formula~$F$ to a CNF formula~$G(F)$ such that, for~$r = r(G(F))$
and~$s = s(G(F))$:
  \begin{enumerate}\itemsep=0pt 
  \item[(a)] if~$F$ is satisfiable, then~$s < r^{c}$;
  \item[(b)] if~$F$ is unsatisfiable, then~$s
    >2^{r^{1/{d}}}$. \end{enumerate}
\end{theorem}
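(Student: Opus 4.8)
The plan is to build $G(F)$ as a ``proof existence'' formula and to exploit that such formulas are easy to refute precisely when the underlying formula is satisfiable. For a CNF $H$ and a bound $t\in\N$, let $\REF(H,t)$ be the CNF on fresh variables that encode a list of $t$ clauses over $\mathrm{var}(H)$ together with their justifications, whose clauses assert that the list is a valid Resolution refutation of $H$; so $\REF(H,t)$ is satisfiable iff $H$ has a refutation of at most $t$ clauses, it has size $\mathrm{poly}(t,|H|)$, and it is polynomial-time computable. Given a $3$-CNF $F$ on $n$ variables I would fix a hard unsatisfiable formula $H_N$ (a Tseitin formula on a bounded-degree expander, or $\PHP^{N+1}_N$) with $N=N(n)$ a large enough polynomial in $n$, form the CNF $\Psi(F) := H_N \vee_z F$ with clauses $z\vee C$ for $C\in F$ and $\negation{z}\vee D$ for $D\in H_N$ (here $z$ is a new variable), and set $G(F):=\REF(\Psi(F),t)$ with $t=t(n)$ another large enough polynomial in $n$. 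Manipulating the selector $z$ shows $\Psi(F)$ is satisfiable iff $F$ is, and that $\Psi(F)\res_{z=1}=H_N$, so $s(\Psi(F))\ge s(H_N)$, which is exponential in $N$ and in particular far larger than $t$; hence $G(F)$ is a genuine contradiction in both cases, and $r=r(G(F))=\mathrm{poly}(n)$.

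For part~(a), assume $\alpha\models F$. Then $\beta:=(z\mapsto 0,\ \alpha,\ \text{arbitrary on }H_N)$ satisfies $\Psi(F)$, and I would turn the soundness of Resolution into a short refutation of $\REF(\Psi(F),t)$: for each slot $i\le t$ of the encoded list let $S_i$ be the single clause stating ``the $i$-th clause of the list has a literal satisfied by $\beta$''. Since every axiom of $\Psi(F)$ is satisfied by $\beta$ and Resolution is sound, each $S_i$ follows in $\mathrm{poly}(n)$ steps from the axioms of $\REF(\Psi(F),t)$ together with $S_j,S_k$ for the two predecessors $j,k<i$ of slot $i$, and $S_t$ resolved against the axioms forcing the last slot to hold the empty clause yields $\emptyset$. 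The total length is $\mathrm{poly}(t,|\Psi(F)|)=\mathrm{poly}(n)$, which is at most $r^{c}$ for a suitable constant $c>0$ because $r=\mathrm{poly}(n)$.

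Part~(b) is the real content. Assume $F$ is unsatisfiable; since $r=\mathrm{poly}(n)$, it suffices to show that every Resolution refutation of $G(F)=\REF(\Psi(F),t)$ has length $2^{n^{\Omega(1)}}$. I would first prove a \emph{width} lower bound — every refutation contains a clause of width $n^{\Omega(1)}$ — and then invoke the size--width tradeoff of Ben-Sasson and Wigderson to obtain the size bound. For the width bound I would run an adversary argument: against a narrow candidate refutation one maintains a partial assignment to the variables of $\REF(\Psi(F),t)$ consistent with ``$\Psi(F)$ has no refutation of $t$ clauses'' and shows it can always be extended along a narrow inference, using that $\Psi(F)$ — and through $\Psi(F)\res_{z=1}$ its expander/pigeonhole core $H_N$ — stays hard even under the restrictions the adversary is forced to commit to. Deriving $\emptyset$ while remaining narrow would contradict the hardness of $H_N$ once $N$ and $t$ are tuned so the parameters fit; then $d>0$ can be chosen so that $2^{n^{\Omega(1)}}>2^{r^{1/d}}$.

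The hard part will be this width, hence size, lower bound for $\REF$ formulas. Two features make it delicate. First, $\REF(\Psi(F),t)$ has \emph{wide} axioms — the constraint ``this slot is one of the $\mathrm{poly}(n)$ axioms of $\Psi(F)$, or is derived from two earlier slots'' is inherently wide — so the size--width theorem cannot be applied off the shelf; one must either engineer a low-width encoding of $\REF$ (this is why a bounded-width $H_N$ such as a Tseitin formula is convenient) or establish the size bound more directly. Second, one must isolate the right closure operation on ``partial descriptions of a refutation of $\Psi(F)$'', show it is preserved by narrow Resolution steps, and show such descriptions are always extendible; this is where the robust hardness (the expansion) of $H_N$ really enters, and where $N(n)$, $t(n)$ and the target widths must be balanced so that the final bound is a true $2^{n^{\Omega(1)}}$ while keeping $G$ polynomial-time. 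The definition of $\REF$, the selector bookkeeping, and the soundness-to-refutation argument of part~(a) are routine by comparison.
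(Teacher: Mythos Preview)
Your outline for part~(a) is essentially Pudl\'ak's argument and matches the paper. The gap is in part~(b). You correctly identify the obstacle---the existential axioms of $\REF$ (``slot $u$ has some left predecessor'', etc.) have width $s+1=\mathrm{poly}(n)$---but you do not resolve it. With initial width already $\mathrm{poly}(n)$, the Ben-Sasson--Wigderson tradeoff gives nothing from a width lower bound of $n^{\Omega(1)}$, and your suggested fix (``engineer a low-width encoding''; using a bounded-width $H_N$) does not touch the problem: the wide clauses come from the range~$[s]$ of the pointer functions $L,R$, not from the clause-width of the base formula. The paper's authors state explicitly that they tried and failed to prove a length lower bound for plain $\REF(F,s)$; this was the main technical obstacle of the whole project, and it was only later settled by Garl\'ik with a separate argument.

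The paper's actual route is different on two counts. First, the width lower bound for $\REF(F,s)$ uses neither an embedded hard formula nor ``robust hardness under restrictions''; it uses only that $F$ is unsatisfiable, via an indistinguishability argument between $\REF(F,s)$ and the \emph{satisfiable} formula $\REF(F,2^{n+1}-1)$: the adversary maintains an order-preserving partial injection from $[s]$ into the full-tree refutation of~$F$, and the relevant measure is \emph{index-width} (number of slot indices mentioned), for which all axioms have width~$\leq 2$. Your $H_N\vee_z F$ detour is unnecessary---if $F$ happens to have a short refutation then $\REF(F,s)$ is satisfiable and (b) holds vacuously---and the ``expansion of $H_N$'' plays no role in the width argument. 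Second, to convert the index-width bound into a \emph{length} bound, the paper passes to a \emph{relativized} formula $\RREF(F,s)$ with extra ``active index'' variables $P[u]$: a random restriction sets each $P[u]$ to $0$ with probability $1/2$ and then randomizes all variables mentioning inactive~$u$, which simultaneously kills every high-index-width clause with high probability and collapses $\RREF$ back to a copy of $\REF$ on the surviving indices. This relativization-plus-random-restriction step is precisely the missing idea in your plan; without it (or something equivalent) the width bound does not yield a size bound.
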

\noindent Moreover,~$c$ and~$d$ can be chosen arbitrarily close to~$1$ and~$2$,
respectively, which means that it is NP-hard to approximate the
minimal Resolution refutation length to
within~$2^{r^{1/2-\epsilon}}$ for any $\epsilon > 0$.

\paragraph{Proof idea} 
%

An idea of how a map~$G$ as in Theorem~\ref{thm:approx} could be
defined is implicit
in~\cite{pudlak}. Pudl\'ak~\cite[Theorem~2]{pudlak} maps a formula~$F$
to~$\REF(F,s)$, for some~$s$ suitable for his context,
where~$\REF(F,s)$ is a~CNF formula 
whose clauses describe, in a natural way, the~Resolution
refutations of~$F$ of length~$s$. He used this function to show that
the canonical pair of Resolution is symmetric. In particular, he
showed that, if~$F$ is satisfiable, then~$\REF(F,s)$ has a short
Resolution refutation.  This refutation proceeds naturally by using a
satisfying assignment for~$F$ as a guide to find a true literal in
each line of the alleged refutation, line by line one after another,
until it gets stuck at the final empty clause.  Conversely, we would
like to show that, if~$F$ is unsatisfiable, then~$\REF(F,s)$ 
is hard for Resolution.  Intuitively, this should be the case:
refuting~$\REF(F,s)$ means proving a lower bound and ``our experience
rather suggests that proving lower bounds is difficult'' -- this is
what Pudl\'ak~\cite[Section~3]{pudlak} states about a similar formula
for strong proof systems.

However, even after considerable time and effort, we failed to prove a
Resolution length lower bound for~$\REF(F,s)$. We bypass the issue by
considering a \emph{harder} version~$\RREF(F,s)$ of~$\REF(F,s)$. The
harder~$\RREF(F,s)$ is obtained from \emph{relativizing}~$\REF(F,s)$
seen as the propositional encoding of a first-order formula with a
built-in linear order, following the general relativization technique
of Dantchev and Riis~\cite{dr}. When~$F$ is satisfiable, 
Pudl\'ak's upper bound for $\REF(F,s)$ goes
through to $\RREF(F,s)$, and the linear order is crucial in this.
On the other hand, a random restriction 
argument in the style
of~\cite{dr} reduces a length lower bound for~$\RREF(F,s)$ to a
certain width lower bound for~$\REF(F,s)$. The bulk of the current
work is in establishing this width lower bound for~$\REF(F,s)$, 
when~$F$ is unsatisfiable.  
It is proved by showing that, even if~$s =
s(n)$ has (not too slow) polynomial growth, the formulas~$\REF(F,s)$
and~$\REF(F,2^{n+1})$ are indistinguishable by inferences of bounded
width, where~$n$ is the number of variables in~$F$. Since every
unsatisfiable~CNF 
formula with~$n$ variables has a refutation of length~$2^{n+1}$, the
formula~$\REF(F,2^{n+1})$ is satisfiable, from which it follows
that~$\REF(F,s)$ does not have bounded-width refutations.

The technical device  that we use  
in the indistinguishability argument is a variant of the {\em conditions}
from~\cite{ms}, a particular formalization of a Prover-Adversary
argument as,~e.g.,~in~\cite{dr}. The wording is meant to point out
some analogy with forcing conditions~\cite{amu}. This is not
straightforward. The main obstacle overcome by our variant is the
presence of the built-in linear order in~$\REF(F,s)$.  In fact,
Dantchev and Riis~\cite[Section~5]{dr} point out explicitly that their
arguments fail in the presence of a built-in linear order.

\paragraph{History of the problem} The complexity of the proof search
problem has been extensively investigated.  Kraj\'i\v{c}ek and
Pudl\'ak \cite{kp} showed that Extended Frege systems\footnote{We
  refer to the textbook~\cite[Chapter~4]{krabuch} for a definition of
  this and the following systems. All notions relevant to state and
  prove our results are going to be defined later.} are not
automatizable assuming RSA is secure against~$\Ppoly$. Subsequently,
Bonet et al. showed this for Frege~\cite{bpr} and bounded depth Frege
systems \cite{bdgmp} assuming the Diffie-Hellman key exchange is
secure against polynomial or, respectively, subexponential size
circuits.

In fact, these results rule out feasible interpolation, an influential
concept introduced to proof complexity by
Kraj\'i\v{c}ek~\cite{krabd,kraint}. We refer to \cite[Chapters
  17,~18]{kraneu} for an account.  If~a system with feasible
interpolation has short refutations of the contradictions that state
that a pair of NP problems are not disjoint, then the pair can be
separated by small circuits. Hence, feasible interpolation can be
ruled out by finding short proofs of the disjointness of an NP pair
that is hard to separate. Such hardness assumptions turn up naturally
in cryptography~\cite{groll} which explains the type of assumptions
that were used in the results above.

The failure of feasible interpolation for a natural system~$R$ implies
(cf.~\cite[Theorem~3]{ab}) that~$R$ is not even {\em weakly}
automatizable in the sense that it would be polynomially simulated
(see~\cite{cookreck}) by an automatizable system.  Hence, the above
results left open whether weak proof systems, in particular those
having feasible interpolation such as Resolution~\cite{kraint}, were
(weakly) automatizable.
We refer to~\cite{ats} for a survey, and focus from now on on
Resolution.

Pudl\'ak showed~\cite[Corollary~2]{pudlak} that the weak
automatizability of a proof system is equivalent to the (polynomial
time) separability of its, so-called, canonical
NP-pair~\cite{razNP}. This is, informally, the feasibility of
distinguishing between satisfiable formulas and those with short
refutations. Hence, to rule it out it suffices to reduce some
inseparable disjoint NP pair to it. Atserias and Maneva~\cite{am}
found in this respect useful pairs associated to two player games. The
two NP sets collect the games won by the respective players, and
separation means deciding the game.  Following~\cite{am,HP}, Beckmann
et al.~\cite{bpt} showed that Resolution is not weakly automatizable
unless parity games are decidable in polynomial time. Note, however,
that this might well be the case, in fact, parity games are decidable
in quasi-polynomial time~\cite{parityquasi}.

Moreover, some non-trivial automating algorithms are known.  Beame and
Pitassi~\cite{bp} observed that treelike Resolution \emph{is}
automatizable in quasi-polynomial time.
For general Resolution there is an algorithm that, when given a 3-CNF
formula with~$n$ variables that has a Resolution refutation of length
at most~$s$, computes a refutation in time~$n^{O(\sqrt{n\log s})}$.
%
%
This follows from the size-width trade-off of Ben-Sasson and
Wigderson~\cite{bw}.  Indeed, it is trivial to find a refutation of
width at most~$w$ in time~$n^{O(w)}$ if there is one (and, in general,
time~$n^{\Omega(w)}$ is necessary~\cite{atsnord}).  When~$s$
is subexponential the runtime of this algorithm is the
non-trivial~$2^{n^{1/2+o(1)}}$. 

However, the automatizability of Resolution is unlikely. First,
Alekhnovich et al.~\cite{abmp} showed, assuming
only~$\PTIME\neq\NPTIME$, that automatization is not possible in
linear time. In fact, they proved more. They considered the
optimization problem of finding, given a contradictory~CNF, a
Resolution refutation that is as short as possible.  They reduced to
it the optimization problem MMCSA of finding, given a monotone
circuit, a satisfying assignment that has Hamming weight as small as
possible. Known PCP theorems imply that this problem is not
approximable with superconstant but sublinear ratio~$2^{\log^{1-o(1)} n}$, so the same holds for
finding short Resolution refutations. This argument can be adapted to
many other refutation systems (see~\cite{abmp}).

But the main convincing evidence that Resolution is not automatizable,
before the result of this paper, was achieved by Alekhnovich and
Razborov~\cite{ar}. By a different and ingenious reduction they showed
that if Resolution, or even treelike Resolution, were automatizable,
then
MMCSA would have, in the terminology of parameterized complexity
theory (see~\cite[Proposition~5]{cgg}), an fpt algorithm with constant
approximation ratio. Now, the same paper~\cite{ar} also established
``the first nontrivial parameterized inapproximability
result''~\cite[p.9]{eick} by further deriving a randomized fpt
algorithm for the parameterized decision version of MMCSA, a
well-known~$\WP$-complete problem (see
e.g.~\cite[Theorem~3.14]{flumgrohe}).  The randomized fpt algorithm
has subsequently been derandomized by Eickmeyer et al.~\cite{eick},
hence Resolution is not automatizable unless~$\WP=\FPT$.  Very
recently, Mertz et al.~\cite{mertz} showed that Resolution is not
automatizable in time~$n^{(\log\log n)^{0.14}}$ unless ETH fails; this
follows the same line of argument as \cite{ar} but is based on a more
recent parameterized inapproximability result due to Chen and
Lin~\cite{chenlin}.

Since these results apply not only to Resolution but even to treelike
Resolution, which is automatizable in quasipolynomial time,
Alekhnovich and Razborov stated that the ``main problem left open''
\cite[Section~5]{ar} is whether general Resolution is automatizable in
quasi-polynomial time.  We consider Theorem~\ref{thm:main} as an
answer to this question.


The computational problem of computing minimal proof lengths also has
a long history.  For first-order logic, the problem dates back to
G\"odel's famous letter to von Neumann; we refer to~\cite{urq} for a
historical discussion, to~\cite{buss} for a proof of G\"odel's claim
in the letter, and to~\cite{cheflugoe} for some more recent results.
In propositional logic, the problem has been shown to be NP-hard for a
particular Frege system by Buss~\cite{buss}, and for Resolution by
Iwama~\cite{iwama}.  Alekhnovich et al.~\cite{abmp} showed that the
minimal Resolution refutation length cannot be approximated to within
any fixed polynomial unless~$\NPTIME\not\subseteq\Ppoly$: for
every~$d\in\N$ there are functions~$G$ and~$S$, computable in
non-uniform polynomial time, such that for every CNF formula~$F$ of
sufficiently large size~$r = r(F)$ we have either~$s(G(F))<S(r)$ 
or~$s(G(F))>S(r)^d$
depending on the satisfiability~$F$.  This falls short to rule out
automatizability because~$S(r)$ has exponential growth.  Earlier,
Iwama~\cite{iwama} found uniformly computable such functions with
polynomially bounded~$S(r)$ but his gap was only~$S(r)$
versus~$S(r)+r^d$ for a constant~$d$,
so also falls short to rule out
automatizability.

\paragraph{Outline} In
Section~\ref{sec:preliminaries} we introduce some notation and basic
terminology from propositional logic. Section~\ref{sec:refutations}
presents Resolution refutations as finite structures.
Section~\ref{sec:ref} is devoted to~$\REF(F,s)$ and proves the width
lower bound when~$F$ is unsatisfiable
(Lemma~\ref{lem:indexwidthlowerbound}). Section~\ref{sec:rref}
discusses the relativized formula~$\RREF(F,s)$, the refutation length
upper bound when~$F$ is satisfiable (Lemma~\ref{lem:upperbound}), and
the refutation length lower bound when~$F$ is unsatisfiable
(Lemma~\ref{lem:lowerbound}).  Theorems~\ref{thm:approx}
and~\ref{thm:main} are derived from these lemmas in
Section~\ref{sec:hardness}.  In Section~\ref{sec:conclusions} we
discuss some open issues. Finally, for easiness of reference, in
Appendix~\ref{sec:appendix} we give the detailed lists of clauses for
the formulas~$\REF$ and~$\RREF$.

\section{Preliminaries} \label{sec:preliminaries}

For~$n\in\N$ we let~$[n]:=\{1,\ldots,n\}$ and understand
that~$[0]=\emptyset$.  A {\em partial function} from a set~$A$ to a
set~$B$ is a function~$f$ with domain~$\Dom(f)$ included in~$A$ and
image~$\Img(f)$ included in~$B$.  We view partial functions from~$A$
to~$B$ as sets of ordered pairs~$(u,v) \in A \times B$. For any
set~$C$, the {\em restriction of~$f$ to~$C$}
is~$f\cap(C\times\Img(f))$. The {\em restriction of~$f$ with
  image~$C$} is~$f\cap(\Dom(f)\times C)$.
 
We fix some notation for propositional logic. Let~$\Vars$ be a set of
propositional variables that take truth values in~$\Bool = \{0,1\}$,
where~$0$ denotes {\em false} and~$1$ denotes~{\em true}. A {\em
  literal} is a variable~$X$ or its negation~$\neg X$, also
denoted~$\bar X$. We also write~$X^{(1)}$ for~$X$ and~$X^{(0)}$
for~$\bar X$. A {\em clause} is a set of literals, that we write as a
disjunction of its elements.  A clause is {\em non-tautological} if it
does not contain both a variable and its negation.  The {\em size} of
a clause is the number of literals in it.  A~{\em CNF~formula},
or~{\em CNF}, is a set of clauses, that we write as a conjunction of
its elements. A~$k$-CNF, where~$k \geq 1$, is a CNF in which all
clauses have size at most~$k$. The {\em size} of a CNF~$F$ is the sum
of the sizes of its clauses. We use~$r(F)$ to denote the size
of~$F$.

An {\em assignment}, or {\em restriction}, is a partial map from the
set of variables~$\Vars$ to~$\Bool$. If~$\alpha$ is an assignment
and~$X^{(b)}$ is a literal, then~$\alpha$ {\em satisfies}~$X^{(b)}$
if~$X \in \Dom(\alpha)$ and~$b = \alpha(X)$; it {\em
  falsifies}~$X^{(b)}$ if~$X \in \Dom(\alpha)$ and~$b =
1-\alpha(X)$. If~$C$ is a clause, then~$\alpha$ {\em satisfies}~$C$ if
it satisfies some literal of~$C$;
it {\em falsifies}~$C$ if it falsifies every literal of~$C$.
The {\em restriction of~$C$ by~$\alpha$}, denoted~$C\res\alpha$,
is~$1$ if~$\alpha$ satisfies~$C$ and~$0$ if~$\alpha$ falsifies~$C$;
if~$\alpha$ neither satisfies nor falsifies~$C$, then~$C\res\alpha$ is
the clause obtained from~$C$ by removing all the falsified literals
of~$C$, i.e.,~$C\res\alpha = C\setminus\{ X^{(1-\alpha(X))}\mid
X\in\Dom(\alpha)\}$.  If~$F$ is a CNF, then~$F\res\alpha$ is the CNF
that contains~$C\res\alpha$ for those~$C\in F$ which are neither
satisfied nor falsified by~$\alpha$, and that contains the empty
clause if some~$C\in F$ is falsified by~$\alpha$.

A clause~$D$ is a {\em weakening} of clause~$C$ if~$C \subseteq D$.  A
clause~$E$ is a {\em resolvent} of clauses~$C$ and~$D$ if there is a
variable~$X$ such that~$X\in C$ and~$\bar X\in D$, and~$E =
(C\setminus\{X\})\cup(D\setminus\{\bar X\})$; we then speak of the
resolvent of~$C$ and~$D$ {\em on~$X$}, that we denote
by~$\mathrm{res}(C,D,X)$. We also say that~$E$ is obtained from~$C$
and~$D$ by a {\em cut on~$X$}.

Let~$F$ be a CNF. A {\em Resolution proof from~$F$} is a
sequence~$(D_1,\ldots,D_s)$ of non-tautological clauses, where~$s \geq
1$ and, for all~$u\in[s]$, it holds that~$D_u$ is a weakening of a
clause in~$F$, or there are~$v,w\in[u-1]$ such that~$D_u$ is a
weakening of a resolvent of~$D_v$ and~$D_{w}$. The {\em length} of the
proof is~$s$; each~$D_u$ is a {\em line}. A {\em Resolution refutation
  of~$F$} is a proof from~$F$ that ends with the empty clause,
i.e.,~$D_s = \emptyset$.  We let~$s(F)$ denote the minimal~$s$ such
that~$F$ has a Resolution refutation of length~$s$; if~$F$ is
satisfiable, we let~$s(F)=\infty$. For a sequence of
clauses~$\Pi=(D_1,\ldots,D_s)$ let~$\Pi\res\alpha$ be obtained
from~$(D_1\res\alpha,\ldots, D_s\res\alpha)$ by removing 1's and
replacing 0's by the empty clause. It is clear that if~$\Pi$ is a
Resolution refutation of~$F$ of length~$s$, then~$\Pi\res\alpha$ is a
Resolution refutation of~$F\res\alpha$ of length at most~$s$.

%

\section{Refutations as structures} \label{sec:refutations}

For this section we fix a CNF~$F$ with~$n$ variables~$X_1,\ldots,X_n$
and~$m$ clauses~$C_1,\ldots,C_m$. We view Resolution
refutations~$(D_1,\ldots, D_s)$ of~$F$ of length~$s$ as finite
structures with a ternary relation~$D$ and four unary
functions~$V,I,L,R$:
\begin{equation}\label{eq:reftype}
\begin{split}
& D \subseteq [s] \times [n] \times \Bool, \\
& V : [s] \rightarrow [n] \cup \{0\}, \\ 
& I : [s] \rightarrow [m] \cup \{0\}, \\
& L : [s] \rightarrow [s] \cup \{0\}, \\
& R : [s] \rightarrow [s] \cup \{0\}.
\end{split}
\end{equation}
The meaning of~$(u,i,b) \in D$ is that the literal~$X_i^{(b)}$ is
in~$D_u$.  For each~$u\in[s]$ exactly one of~$V(u)$ or~$I(u)$ is
non-zero. The meaning of~$V(u)=i\in[n]$ is that~$D_u$ is a weakening
of the resolvent of~$D_v$ and~$D_w$ on~$X_i$, where~$v = L(u)\in[u-1]$
and~$w = R(u)\in[u-1]$, and~$\bar X_i\in D_v$ and~$ X_i\in D_w$. The
meaning of~$I(u)=j\in[m]$ is that~$D_u$ is a weakening of the
clause~$C_j$ of~$F$.  Formally, a structure~$(D,V,I,L,R)$ of type
\eqref{eq:reftype} is a {\em refutation of~$F$ of length~$s$} if the
following hold for all~$u,v \in [s]$,~$i,i' \in [n]$,~$j \in [m]$,
and~$b \in \Bool$:
%
%

\medskip

\begin{longtable}{cll}
(R1) & & $V(u) = 0$ or $I(u) = 0$, but not both; \\
(R2) & & if $I(u)=0$, then both $R(u)\neq 0$ and $L(u)\neq 0$; \\
(R3) & & $L(u) < u$ and $R(u) < u$; \\
(R4a) & & if $V(u)=i$ and $L(u)=v$, then $(v,i,0)\in D$; \\
(R4b) & & if $V(u)=i$ and $R(u)=v$, then $(v,i,1)\in D$; \\
(R5a) & & if $V(u)=i\not=i'$, $L(u)=v$, and $(v,i',b)\in D$, then $(u,i',b) \in D$; \\
(R5b) & & if $V(u)=i\not=i'$, $R(u)=v$, and $(v,i',b)\in D$, then $(u,i',b) \in D$; \\
(R6) & & if $I(u)=j$ and $X^{(b)}_i$ appears in $C_j$, then $(u,i,b)\in D$; \\
(R7) & & $(u,i,0) \not\in D$ or $(u,i,1) \not\in D$; \\
(R8) & & $(s,i,b)\notin D$.
\end{longtable}
\medskip

In words, (R1) determines, for every line~$D_u$, whether it is a
weakening of an initial clause, i.e.,~$I(u)\neq 0$, or a weakening of
a resolvent, i.e.,~$V(u)\neq 0$. In the first case~$C_{I(u)}\subseteq
D_u$ by~(R6). In the second
case,~$\mathrm{res}(D_{L(u)},D_{R(u)},X_{V(u)}) \subseteq D_u$ by (R4)
and (R5), with (R2) and~(R3) ensuring that~$D_{L(u)}$ and~$D_{R(u)}$
are earlier lines in the sequence.
Finally, (R7) ensures no~$D_u$ is tautological, and (R8) ensures~$D_s$
is empty.

We give an example that will play a crucial role in the proof of the
width lower bound.

\begin{example} \label{ex:full}
We use $(D^*,V^*,I^*,L^*,R^*)$ to denote the \emph{full-tree Resolution
  refutation} of $F$. It has length 
\begin{equation*}
  s^*:=2^{n+1}-1
\end{equation*}
  and its
clauses are arranged in the form of a full binary tree of height $n$
with $2^n-1$ internal nodes and $2^n$ leaves. This tree has
one node $n_a$ at level $h \in \{0\} \cup [n]$ for every $a =
(a_1,\ldots,a_h) \in \{0,1\}^h$ that is labelled by the clause 
\begin{equation*}
C_a= X_1^{(a_1)}\vee\cdots\vee X_h^{(a_h)},
\end{equation*}
that is, the unique clause in these variables falsified by the
assignment that maps~$X_i$ to~$1-a_i$.
In particular, the root of the tree is labelled by the empty clause
and, for~$h\in[n]$ and~$a \in \{0,1\}^{h-1}$, the clause~$C_a$ that
labels node~$n_a$ is the resolvent of the clauses~$C_{a1}$
and~$C_{a0}$ that label the children nodes~$n_{a1}$ and~$n_{a0}$ on
the variable~$X_h$, i.e.,~$C_a =
\mathrm{res}(C_{a1},C_{a0},X_h)$. Since~$F$ is unsatisfiable, every
clause~$C_a$ that labels a leaf~$n_a$ is a weakening of some
clause~$C_j$ of~$F$.

To view this refutation as a structure of type \eqref{eq:reftype} we
have to identify the nodes~$n_a$ with numbers in~$[s^*]$. We first
identify the leafs, i.e., the nodes~$n_a$ with~$a\in\{0,1\}^n$, with
the numbers~$[2^n]$, then we identify the nodes on level~$n-1$, i.e.,
the nodes~$n_a$ with~$a\in\{0,1\}^{n-1}$, with the numbers
in~$[2^{n}+2^{n-1}]\setminus[2^{n}]$ and so on, with the root
getting~$s^*=2^{n+1}-1$.

Let~$a=(a_1,\ldots,a_h)\in\{0,1\}^h$ for~$h\le n$. We
set~$V^*(n_a):=0$ if~$h=n$, and~$V^*(n_a):=h$ if~$h<n$. We
set~$I^*(n_a):=0$ if~$h<n$, and~$I^*(n_a):=j$ if~$h=n$ and~$j\in[m]$
is, say, smallest such that~$C_a$ is a weakening of~$C_j$.  We
set~$L^*(n_a):=R^*(n_a):=0$ if~$h=n$. If~$h<n$ we
set~$L^*(n_a):=n_{a0}$ and~$R^*(n_a):=n_{a1}$.  Finally,~$(n_a,i,b)\in
D^*$ if and only if~$i\in[h]$ and~$b=a_i$.

\end{example}

\section{Non-relativized formula REF} \label{sec:ref}

Given a CNF~$F$ with~$n$ variables~$X_1,\ldots,X_n$ and~$m$
non-tautological clauses~$C_1,\ldots,C_m$, and a natural number~$s
\geq 1$, we describe a CNF formula~$\REF(F,s)$ that is satisfiable if
and only if~$F$ has a refutation of length~$s$. Its variables are:
\begin{itemize} \itemsep=0pt
\item $D[u,i,b]$ for~$u \in [s]$,~$i \in [n]$,~$b \in \Bool$
  indicating that~$(u,i,b) \in D$.
\item $V[u,i]$ for~$u \in [s]$,~$i \in [n] \cup \{0\}$ indicating
  that~$V(u) = i$.
\item $I[u,j]$ for~$u \in [s]$,~$j \in [m] \cup \{0\}$ indicating
  that~$I(u) = j$.
\item $L[u,v]$ for~$u \in [s]$,~$v \in [s]\cup\{0\}$ indicating
  that~$L(u) = v$.
\item $R[u,v]$ for~$u \in [s]$,~$v \in [s]\cup\{0\}$ indicating
  that~$R(u) = v$.
\end{itemize}

Clearly, any assignment to these variables describes a ternary
relation~$D$ and binary relations~$V$,~$I$,~$L$ and~$R$.  The clauses
of~$\REF(F,s)$ are listed in Table~\ref{fig:refclauses} of
Appendix~\ref{sec:appendix}.  This set of clauses is satisfied
precisely by those assignments that describe refutations of~$F$ of
length~$s$. Conversely, given a structure as in \eqref{eq:reftype} the
{\em associated} assignment~$\alpha$ satisfies~$\REF(F,s)$ if and only
if~$(D,V,I,L,R)$ is a refutation of~$F$ of length~$s$; this
assignment~$\alpha$ maps
variables~$D[u,i,b]$,~$V[u,i]$,~$I[u,j]$,~$L[u,v]$, and~$R[u,v]$ to 1
or 0 depending on whether, respectively,~$(u,i,b)\in
D$,~$V(u)=i$,~$I(u)=j$,~$L(u)=v$, and~$R(u)=v$ or not.

The index $u \in [s]$ is \emph{mentioned} in the variables 
\begin{equation*}
D[u,i,b], V[u,i], I[u,j], L[u,v], R[u,v].
\end{equation*}
Observe that if~$v \not= u$, then~$v$ is not mentioned in~$L[u,v]$
or~$R[u,v]$. The \emph{index-width} of a clause is the number of
indices mentioned by some variable occurring in the clause.  Observe
that all clauses of~$\REF(F,s)$ have index-width at most two.  The
index-width of a Resolution refutation is the maximum index-width of
its clauses.

\begin{lemma} \label{lem:indexwidthlowerbound}
For all integers~$n,w,s \geq 1$ with~$2^n \ge s \geq 6nw$ and every
unsatisfiable CNF~$F$ with~$n$ variables, every Resolution refutation
of~$\REF(F,s)$ has index-width at least~$w$.
\end{lemma}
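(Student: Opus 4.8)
The plan is to prove the lower bound by a Prover--Adversary / indistinguishability argument, following the strategy sketched in the introduction: show that bounded index-width inferences cannot tell $\REF(F,s)$ apart from $\REF(F,2^{n+1})$, and since the latter is satisfiable (because every unsatisfiable $F$ on $n$ variables has the full-tree refutation of length $s^*=2^{n+1}-1 < 2^{n+1}$, cf.\ Example~\ref{ex:full}), conclude that $\REF(F,s)$ has no refutation of index-width $<w$. Concretely, I would set up a family of ``conditions'' --- partial descriptions of a putative refutation of $F$ of length $s$, but where the underlying refutation being described is really a sub-object of the full-tree refutation mapped into $[s]$ --- and maintain them along any alleged Resolution refutation of $\REF(F,s)$. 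A condition should record a partial injection from a small set of used indices in $[s]$ into nodes of the full binary tree (equivalently, into the ``virtual'' refutation of length up to $2^{n+1}$), together with the induced partial assignment to the $D,V,I,L,R$ variables; the size of the domain is controlled by the index-width budget $w$, and the arithmetic constraint $s \ge 6nw$ is what guarantees there is always room in $[s]$ to place the $O(nw)$ fresh indices a single resolution step can force.

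First I would make precise the notion of condition: a condition is a partial map $p$ assigning to finitely many indices $u \in [s]$ a node $n_a$ of the full tree (so $a \in \{0,1\}^{\le n}$), subject to consistency requirements mirroring (R1)--(R8) --- e.g.\ if $u$ is mapped to an internal node then the children and the resolved variable must be consistent with what $L(u),R(u),V(u)$ would dictate, if $u$ is mapped to a leaf then $I(u)$ points to a clause of $F$ it weakens, and the ternary relation $D$ on mapped indices agrees with the tree labels $C_a$. The key closure property (the ``extension lemma'') is: any condition with domain of size $\le 2w$ (say) and any variable $x$ of $\REF(F,s)$ can be extended to a condition of domain size $\le 2w + O(n)$ that decides $x$; this is where I walk through the cases on which kind of variable $x$ is ($D[u,i,b]$, $V[u,i]$, $I[u,j]$, $L[u,v]$, $R[u,v]$) and use that $2^n \ge s$ gives enough tree nodes to embed, while $s \ge 6nw$ gives enough room in $[s]$ --- in particular an index $u$ in the domain mentions at most one other index ($L(u)$ or $R(u)$), so index-width $w$ clauses touch at most $O(w)$ indices and only force a bounded-size local completion of the tree.

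Then I would run the standard bookkeeping: assume for contradiction a refutation $\Pi = (E_1,\ldots,E_t)$ of $\REF(F,s)$ of index-width $<w$. By induction on the line number, I maintain that for each line $E_\ell$ there is a condition $p_\ell$ of domain size $< Cw$ (for a suitable absolute constant $C$, chosen so the extension lemma keeps us below the $6nw$ slack) that \emph{falsifies} $E_\ell$, i.e.\ the partial assignment induced by $p_\ell$ sets every literal of $E_\ell$ to $0$. For an axiom $E_\ell$ of $\REF(F,s)$: since every axiom has index-width $\le 2$ and is a \emph{true} clause of the virtual full-tree refutation (this needs a small check, clause by clause from Table~\ref{fig:refclauses}, that each $\REF$-clause holds of the tree structure), no condition can falsify it, so axioms are automatically handled --- wait, rather: the invariant is that $p_\ell$ is consistent and falsifies $E_\ell$, and for axioms we instead observe $E_\ell$ cannot be falsified by \emph{any} consistent condition, giving the base case vacuously only if no such $E_\ell$ is an axiom; the cleaner formulation is to derive the contradiction at the end. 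For the inductive step at a resolution inference $E_\ell = \mathrm{res}(E_j,E_k,x)$: take $p_\ell$ falsifying $E_\ell$, extend it via the extension lemma to decide $x$; if it sets $x$ so that the corresponding literal of $E_j$ (resp.\ $E_k$) is false, the extended condition falsifies $E_j$ (resp.\ $E_k$), letting us recurse; since $\Pi$ is finite and acyclic this descent terminates at an axiom, and there a falsifying consistent condition cannot exist --- contradiction. Finally, the empty clause $E_t=\emptyset$ is falsified by the empty condition, seeding the descent.

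The main obstacle, exactly as the authors flag, is the \emph{built-in linear order} on $[s]$ inside $\REF(F,s)$ --- conditions (R3) demand $L(u)<u$ and $R(u)<u$, so the embedding of tree nodes into $[s]$ must be order-compatible with the tree's descendant relation, and Dantchev--Riis-style arguments break here because an adversary committing to place index $u$ at tree node $n_a$ implicitly constrains where all of $n_a$'s descendants and ancestors can go. I expect the real work is in the extension lemma: showing that a small condition can always be extended to place a few more indices while (i) keeping the map an order-embedding, (ii) respecting that the tree has height exactly $n$ and $2^{n+1}-1$ nodes but we only have $s \le 2^n$ slots, so one must embed a \emph{sparse subtree} rather than the whole tree. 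The resolution of this (the authors' ``variant of conditions'') is presumably to let conditions talk not about literal tree nodes but about a partial, order-respecting assignment of $[s]$-indices to tree \emph{addresses} $a \in \{0,1\}^{\le n}$ together with a partial extension of the order, chosen so that the $6nw$ slack always leaves an interval of $[s]$ free between any two committed indices to host the $\le n$ new ancestors forced by one step; I would isolate this as a separate combinatorial lemma about extending partial order-embeddings of binary-tree prefixes into $[s]$ and prove it by a direct greedy/interval-splitting construction before assembling the main induction above.
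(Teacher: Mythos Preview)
Your overall framework matches the paper's: set up conditions that are partial embeddings of $[s]$ into the full-tree refutation on $[s^*]=[2^{n+1}-1]$, prove an extension lemma, and descend from the empty clause to an axiom to reach a contradiction. Where your proposal has a genuine gap is exactly at the point you flag as the obstacle---the built-in order---and your suggested fix (a greedy interval-splitting that leaves room for ``the $\le n$ new ancestors forced by one step'') is both misdiagnosed and underspecified.

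First, the accounting is off: a single extension step does not force $O(n)$ new indices. In the paper a condition is a pair $(g,h)$ of partial injections with $g\subseteq h$ and $\Img(h)=\Img(g)\cup\partial\Img(g)$, where $\partial$ adds only the two \emph{children} $L^*(g(u)),R^*(g(u))$ of each committed node; extending by one index $u$ enlarges $\Dom(g)$ by $1$ and $\Dom(h)$ by at most~$2$. The bound $|\Dom(h)|\le 3|\Dom(g)|-2$ (not $O(nw)$) drives the combinatorics, and after each step one restricts the condition back to the $<w$ indices mentioned in the current clause. So the factor $n$ in $s\ge 6nw$ is not ``ancestors per step''.

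Second, and more importantly, the paper's handling of the order is not greedy interval maintenance but a fixed \emph{block decomposition aligned with tree levels}. Choose $k$ with $2^k<3w\le 2^{k+1}$; partition $[s^*]$ into $B^*_0$ (the top $k$ levels, size $2^{k+1}-1$) and $B^*_i$ (level $k+i$) for $i\in[n-k]$, and partition $[s]$ into a top block $B_0$ of the same size as $B^*_0$ together with $n-k$ further contiguous blocks each of size $\ge 3w$. This is where $s\ge 6nw$ is spent: roughly $n$ blocks of size $\ge 3w$. Conditions are required to send $B_0$ rigidly to $B^*_0$ via the order-preserving bijection $t$ (this pins $s$ to $s^*$, needed for axiom~(A21)) and to send each $B_i$ into $B^*_i$ for $i\ge 1$. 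Order-compatibility now comes \emph{for free}: $L^*$ and $R^*$ always move exactly one level down in the tree, hence to the next block, and within $B_0$ the rigid bijection $t$ already preserves~$<$. The extension lemma becomes a one-line count: since $|\Dom(h)|\le 3w-2$ and every block has size $\ge 3w$, there is always a free slot in the required block on both the $[s]$ and $[s^*]$ sides.

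Your greedy scheme would have to deliver the same conclusion, but as stated it does not: without level-alignment, placing a new index $u$ between committed neighbours $u^-<u<u^+$ does not by itself control where $L^*(g(u))$ and $R^*(g(u))$ land relative to anything already placed, and you have not said how to choose $g(u)$ so that its children can be given preimages $<u$ consistently with all future requests. The block structure is precisely the device that makes this automatic; I would adopt it rather than try to rescue the interval-splitting idea.
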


\begin{proof}
Fix an unsatisfiable CNF~$F$ with~$n$ variables and~$m$ clauses.  For
this proof let~$G$ denote the formula~$\REF(F,s)$ and let~$G^*$ denote
the formula~$\REF(F,s^*)$, where~$s^* = 2^{n+1} - 1$ is the length of
the full-tree Resolution refutation of~$F$ from Example~\ref{ex:full},
which exists for~$F$ because it is unsatisfiable.  Let~$\alpha^*$ be
the assignment associated to~$(D^*,V^*,I^*,L^*,R^*)$.

Let~$k$ be an integer such that~$2^k < 3w \le 2^{k+1}$ and note
that~$1 \leq k < n$ since~$n,w \geq 1$ and~$2^n \geq 6nw$.  We
partition~$[s^*]$ into~$n-k+1$
intervals~$B^*_0,B^*_1,\ldots,B^*_{n-k}$ where
\begin{align*}
& B^*_0 := [s^*]\setminus [s^*-2^{k+1}+1],\\
& B^*_i := [s^*-2^{k+i}+1]\setminus [s^*-2^{k+1+i}+1]\quad\textup{ for $i = 1,\ldots,n-k$.}
\end{align*}
In the notation of Example~\ref{ex:full},~$B^*_0=\{n_a\mid
a\in\{0,1\}^{\le k}\}$ is the set of~$2^{k+1}-1$ many nodes at the
top~$k$ levels of the full binary tree. For~$i \in [n-k]$, the~$i$-th
block~$B^*_i=\{n_a\mid a\in\{0,1\}^{k+i}\}$ is the set of nodes at
level~$k+i$ of the full binary tree. In particular,~$B^*_{n-k}$ is the
set of leaves.

Likewise, we partition~$[s]$ into~$n-k+1$
intervals~$B_0,B_1,\ldots,B_{n-k}$ where
\begin{align*}
& B_0 := [s]\setminus[s-2^{k+1}+1],\\ 
& B_i := [s-2^{k+1} \cdot i+1]\setminus [s-2^{k+1} \cdot 
(i+1)+1]\quad\textup{ for $i
= 1,\ldots,n-k-1$,}\\
& B_{n-k} := [s-2^{k+1}\cdot (n-k)+1]. 
\end{align*} 
Observe that~$|B^*_0|=|B_0|=2^{k+1}-1$; let~$t : B_0 \rightarrow
B^*_0$ be the bijection defined by~$t(u) := u-s+s^*$ so that for
all~$u,v\in B_0$ it holds that
\begin{equation}\label{eq:t}
u< v \;\;\text{ if, and only if, }\;\; t(u)< t(v).
\end{equation}
Observe that for all~$i\in[n-k-1]$:
\begin{align}
& |B^*_i| = 2^{k+i} \geq 2^{k+1}= |B_i| \ge 3w. \label{eq:B1} \\
& |B^*_{n-k}| = 2^n \geq s-2^{k+1}\cdot (n-k)+1 = |B_{n-k}| \geq 3w,
\label{eq:B2}
\end{align}
with the second following from~$2^n\ge s \geq 6nw$ and~$1\leq k < n$.

Let~$\mathscr{H}$ be the collection of partial
functions~$h:[s]\cup\{0\}\rightarrow [s^*]\cup\{0\}$ such that:

\medskip
\begin{longtable}{cll}
(H1) & & $h$ is injective, \\
(H2) & & $0 \in \Dom(h)$ and $h(0)=0$, \\
(H3) & &  if $u \in \Dom(h) \cap B_0$, then $h(u) = t(u) \in B^*_0$, \\
(H4) & & if $u \in \Dom(h) \cap B_i$ with $i \in [n-k]$, then $h(u) \in B^*_i$.
\end{longtable}
\medskip

\noindent In words, condition (H4) says that~$h$ preserves membership
in matching intervals, and (H3) says that the 0-intervals are kept
intact through the fixed bijection~$t$. Preserving the intervals has
the following important consequence:

\begin{claim} \label{claim:obvious}
For every~$h \in \mathscr{H}$ and~$u,v \in \Dom(h)\setminus\{0\}$ the
following hold:
\begin{enumerate} \itemsep=0pt
\item $h(u) \not= 0$ and $h(v) \not= 0$,
\item if $L^*(h(v)) \in \Img(h)$, then $h^{-1}(L^*(h(v))) < v$,
\item if $R^*(h(v)) \in \Img(h)$, then $h^{-1}(R^*(h(v))) < v$.
\end{enumerate}
\end{claim}

\begin{proof} Property \emph{1} follows from (H1) and (H2).  To prove
  \emph{2} we distinguish several cases: If~$v \in B_{n-k}$,
  then~$h(v) \in B^*_{n-k}$ by~(H4), hence~$L^*(h(v)) = 0$
  and~$h^{-1}(L^*(h(v))) = 0$ by~(H2), which is smaller than~$v \not=
  0$.  If~$v \in B_i$ for some~$i \in [n-k-1]$, then~$h(v) \in B^*_i$
  by~(H4), hence~$L^*(h(v)) \in B^*_{i+1}$, and~$h^{-1}(L^*(h(v))) \in
  B_{i+1}$ by~(H4) again, which is smaller than~$v \in B_i$.  If~$v\in
  B_0$, then first note that~$h(v)=t(v)\in B^*_0$ by (H3).  We
  distinguish the cases whether~$L^*(h(v)) \in B_0^*$ or not.  In
  case~$L^*(h(v))\in B_0^*$, we
  have~$h^{-1}(L^*(h(v)))=t^{-1}(L^*(h(v)))$. Since~$L^*(h(v))<h(v)$,
  by~\eqref{eq:t} we have~$t^{-1}(L^*(h(v)))<t^{-1}(h(v))=t^{-1}(t(v))
  = v$.  In case~$L^*(h(v))\notin B_0^*$, we have~$L^*(h(v))\in
  B^*_1$, so~$h^{-1}(L^*(h(v)))\in B_1$ by~(H4), which is smaller
  than~$v\in B_0$.  The proof of \emph{3} is analogous to that of
  \emph{2}.
\end{proof}

For a set~$I \subseteq [s^*]\cup\{0\}$, let
\begin{equation*}
\partial I := \big\{
L^*(u) \mid u \in I\setminus \{0\} \big\} \cup \big\{ R^*(u) \mid u \in I\setminus\{0\} \big\}.
\end{equation*}
A \emph{condition} is a pair~$p = (g,h)$, where~$g$ and~$h$ are
functions in~$\mathscr{H}$, such that

\medskip
\begin{longtable}{cll}
(C1) & & $g \subseteq h$, \\
(C2) & & $\Img(h) = \Img(g) \cup \partial \Img(g)$.
\end{longtable}
\medskip

\noindent We say a condition~$p' = (g',h')$ {\em extends}~$p$
if~$h\subseteq h'$, i.e.,~$h'$ extends~$h$ as a function.  Observe,
since~$0\in\Dom(g)$,
\begin{equation}\label{eq:sizeh}
|\Dom(h)|\le 3|\Dom(g)|-2.
\end{equation}

%
%
We define a partial truth assignment~$\alpha(p)$ that sets the
variables of~$G$ as follows. Note that if~$D[u,i,b]$,~$V[u,i]$,
and~$I[u,j]$ are variables of~$G$, then~$D[g(u),i,b]$,~$V[g(u),i]$,
and~$I[g(u),j]$ are variables of~$G^*$ which are evaluated
by~$\alpha^*$. The assignment~$\alpha(p)$ is defined precisely on the
variables of~$G$ that mention some~$u\in \Dom(g)$.  For such~$u$ it
maps
\begin{enumerate} \itemsep=0pt
\item[--] $D[u,i,b]$ to $ \alpha^*(D[g(u),i,b])$, for all $i \in [n]$ and $b\in\Bool$;
\item[--] $V[u,i]$ to $\alpha^*( V[g(u),i])$, for all $i \in [n]\cup\{0\}$;
\item[--] $I[u,j]$ to $\alpha^*(  I[g(u),j])$, for all $j \in [m]\cup\{0\}$;
\item[--] $L[u,v]$ to $1$ or $0$ indicating whether $v = h^{-1}(L^*(g(u)))$, for all $v \in [s] \cup\{0\}$;
\item[--] $R[u,v] $ to $ 1$ or $0$ indicating whether $v = h^{-1}(R^*(g(v)))$, for all $v \in [s] \cup \{0\}$.
\end{enumerate}
Note that~$L^*(g(u))$ and~$R^*(g(u))$ belong to~$\partial\Img(g)
\subseteq \Img(h)$ for every~$u \in \Dom(g)$, so~$h^{-1}$ is defined
in the last two cases.

Clearly, if a condition~$p'$ extends~$p$,
then~$\alpha(p)\subseteq\alpha(p')$.  For~$I \subseteq [s]$, the {\em
  restriction of~$p$ to~$I$}, denoted~$p\res I$, is the
pair~$(g^*,h^*)$ where~$g^*$ is the restriction of~$g$ to~$I \cup
\{0\}$, and~$h^*$ is the restriction of~$h$ with image~$\Img(g^*) \cup
\partial\Img(g^*)$.

\begin{claim}\label{cl:res}
If~$p$ is a condition and~$I \subseteq [s]$, then~$p\res I$ is a
condition and~$\alpha(p\res I)\subseteq\alpha(p)$.
\end{claim}

\begin{proof}
The requirement that~$g'$ and~$h'$ belong to~$\mathscr{H}$ is
obviously satisfied since (H1)-(H4) are preserved by restrictions to
subsets that contain~$0$. (C1) and (C2) are clear, so~$p\res I$ is a
condition. The inclusion~$\alpha(p\res I)\subseteq\alpha(p)$ is clear
since~$p$ extends~$p\res I$.
\end{proof}

\begin{claim}\label{cl:extend}
If~$p=(g,h)$ is a condition with~$|\Dom(g)|\le w$ and~$u\in[s]$, then
there exists a condition~$p'=(g',h')$ that extends~$p$ and such
that~$\Dom(g')=\Dom(g)\cup\{u\}$.
\end{claim}

\begin{proof}
We assume~$u\not\in\Dom(g)$ (otherwise we take~$p':=p$) and set~$g':=
g\cup\{(u,u')\}$ for~$u'\in[s^*]$ chosen as follows: if~$u\in B_0$,
take~$u':=t(u)$; otherwise~$u\in B_i$ for some~$i \in [n-k]$ and we
choose~$u'\in B^*_i\setminus\Img(h)$. Note there exists~$u'$ as
desired because~$ |B^*_i|\ge 3w$ by~\eqref{eq:B1} or~\eqref{eq:B2}, so
by~\eqref{eq:sizeh}
\begin{equation*}
|B^*_i\setminus\Img(h)| \ge |B^*_i|-3\cdot |\Dom(g)| +2>0.
\end{equation*}
It is clear that~$g'\in\mathscr{H}$. Write~$v'_0:=L^*(u')$
and~$v'_1:=R^*(u')$. We have to find~$v_0,v_1\in[s]\cup\{0\}$ such
that~$h':=h\cup\{(v_0,v_0'),(v_1,v_1')\}\in\mathscr H$. Assume at
least one of~$v_0',v'_1$ is not in~$\Img(h)$. Then it is distinct
from~$0$ (i.e.,~$u'\notin B^*_{n-k}$), say it is in~$B_i^*$. If~$i=0$,
we find~$v_0,v_1$ as the pre-images of~$v_0',v_1'$
under~$t$. Otherwise~$i\in[n-k]$ and we choose~$v_0,v_1\in B_i$ such
that~$h'$ is injective. This can be done because~$ |B_i|\ge 3w$
by~\eqref{eq:B1} or~\eqref{eq:B2}, so by~\eqref{eq:sizeh}
\begin{equation*}
|B_i\setminus\Dom(h)| \ge |B_i|-3\cdot |\Dom(g)| +2\ge 2.
\end{equation*}
It is clear that~$h'\in\mathscr H$.
\end{proof}

\begin{claim}\label{cl:axtrue}
If~$p$ is a condition and~$C$ is a clause of~$G$, then~$C\res
\alpha(p) \not= 0$.
\end{claim}

\begin{proof}
Let~$p = (g,h)$, write~$\alpha:=\alpha(p)$ and assume~$\alpha$ is
defined on all variables of~$C$. Then~$g$ is defined on all indices
mentioned by~$C$. We distinguish by cases according to the type
(A1)-(A21) of~$C$.
\begin{enumerate}\itemsep=0pt
\item[--] In case $C$ is of type (A1), i.e.,
  $C=\bigvee_{i\in[n]\cup\{0\}}V[u,i]$ for some $u\in\Dom(g)$, then
  $C\res \alpha$ equals $(\bigvee_{i\in[n]\cup\{0\}}V[g(u),i])\res \alpha^*$
  and this is 1 because $\bigvee_{i\in[n]\cup\{0\}}V[g(u),i]$ is a clause of~$G^*$.  Case (A2) is similar.
\item[--] In case (A3), ($u\in\Dom(g)$ and) $\alpha$ satisfies $L[u,v]$ for $v:=h^{-1}(L^*(g(u)))\in[s]\cup\{0\}$. 
Note $L^*(g(u))\in\partial\Img(g)\subseteq\Img(h)$, so $v$ is well-defined. Hence $C\res \alpha=1$.
Case (A4) is similar.
\item[--] In case (A5), $C\res \alpha$ equals $(\bar V[g(u),i]\vee \bar V[g(u),i'])\res \alpha^*$ and this is 1 because
$\bar V[g(u),i]\vee \bar V[g(u),i']$ is a clause of $G^*$. Case (A6) is similar.
\item[--] In case (A7), $v$ or $v'$ is distinct from $h^{-1}(L^*(g(u)))$ and then, respectively, $L[u,v]$ or $L[u,v']$ is falsified by $\alpha$.
Hence $C\res \alpha=1$. Case (A8) is similar.
\item[--] In case (A9), $C\res \alpha$ equals $(\bar I[g(u),0]\vee\bar V[g(u),0])\res \alpha^*$. But this is 1 since $\bar I[g(u),0]\vee\bar V[g(u),0]$ is a clause of $G^*$. Case (A10) is similar.
\item[--] In case  (A11), note $\alpha(L[u,0])=1$ implies $h^{-1}(L^*(g(u)))=0$, so $L^*(g(u))=0$ by Claim~\ref{claim:obvious}~(1). Then 
$g(u)$ is a leaf and $I^*(g(u))\neq 0$. Hence $0=\alpha^*(I[g(u), 0])=\alpha(I[u,0])$, so $C\res \alpha=1$.
Case (A12) is similar.
\item[--] In case (A13), note $\alpha(L[u,v])=1$ implies $v=h^{-1}(L^*(g(u)))$. But $h^{-1}(L^*(g(u)))=h^{-1}(L^*(h(u)))<u$ by (C1) and Claim~\ref{claim:obvious}~(2). Case (A14) is similar.
\item[--] In case (A15), $C\res \alpha=0$ implies  $u,v\in\Dom(g)$ and $v=h^{-1}(L^*(g(u)))$. Hence 
$h(v)=g(v)=L^*(g(u))$ (by (C1)) and $\alpha^*(L[g(u),g(v)])=1$. Further, $C\res  \alpha=0$ implies $\alpha^*(V[g(u),i])=1$ and $\alpha^*(D[g(v),i,0])=0$. Hence $\alpha^*$ falsifies the clause $\bar L[g(u),g(v)]\vee\bar V[g(u),i]\vee D[g(v),i,0]$ of $G^*$, a contradiction. Cases (A16)-(A18) are similar.
\item[--] In case (A19), $C\res \alpha=0$ implies  that $\alpha^*$ falsifies the clause $\bar I[g(u),j]\vee D[g(u),i,b]$ of~$G^*$, a contradiction. Case (A20) is similar.
\item[--] In case (A21), $\alpha(\bar D[s,i,b])=0$ implies $s\in\Dom(g)$ and $\alpha^*$ falsifies the $\bar D[g(s),i,b]$. 
But this is a clause of $G^*$ since $g(s)=t(s)=s^*$ by (H3) -- contradiction.
\end{enumerate}
This finishes the proof of Claim \ref{cl:axtrue}.
\end{proof}

We are ready to finish the proof of the lemma. Let~$P$ be the set of
conditions~$p = (g,h)$ with~$|\Dom(g)|\le w$.  Assume that there
exists a Resolution refutation of~$\REF(F,s)$ of index-width smaller
than~$w$.  Let~$p_0 = (g_0,h_0)$ where~$g_0 = h_0 = \{(0,0)\}$ and
note that~$\partial\Img(g_0) = \emptyset$, so~$p_0 \in P$. The
assignment~$\alpha(p_0)$ is empty and falsifies the empty clause, the
last clause of the refutation. Let~$E$ be the earliest clause in the
refutation such that~$E\res \alpha(p)=0$ for some condition~$p \in P$.
In particular,~$\alpha(p)$ is defined on all variables of~$E$.  By
Claim~\ref{cl:axtrue},~$E$ is not a weakening of a clause from~$G$.
Hence,~$E$ is obtained by a cut of earlier clauses~$C$ and~$D$ on some
variable. Let~$u\in[s]$ be the index mentioned by this variable.
Choose~$p'$ according to Claim~\ref{cl:extend}. Then~$\alpha(p')$ is
defined on all variables in~$C$,~$D$, and~$E$ and extends the partial
assignment~$\alpha(p)$, so falsifies~$E$. By soundness it
falsifies~$C$ or~$D$, say, it falsifies~$C$. Let~$p''$ be the
restriction of~$p'$ to the indices mentioned
in~$C$. Then~$\alpha(p'')$ falsifies~$C$ and~$p''\in P$
by~Claim~\ref{cl:res}. This contradicts the choice of~$E$.
\end{proof}

\begin{remark}
The width lower bound in the previous lemma does not have much to do
with Resolution; a more general version can be formulated using the
notions of semantic refutations and Poizat width from \cite{am}.  The
notion of a Poizat tree is straightforwardly adapted to the
many-sorted structures coding refutations. Define {\em index
  Poizat width} like Poizat width but using the {\em index height} of
a Poizat tree: the maximum over its branches of the number of indices
from~$[s]$ appearing in queries of the branch. Then, the conclusion of
the above lemma can be strengthened to: every semantic refutation
of~$\REF(F,s)$ contains a formula of index Poizat width at
least~$w/3$.
\end{remark}

\section{Relativized formula RREF} \label{sec:rref}

Given a CNF formula~$F$ with~$n$ variables and~$m$ clauses, and a
natural number~$s \geq 1$, we define the CNF formula~$\RREF(F,s)$ as
follows. We again write~$X_1,\ldots,X_n$ for the variables
and~$C_1,\ldots,C_m$ for the clauses of~$F$. The CNF
formula~$\RREF(F,s)$ has the same variables as~$\REF(F,s)$ plus
\begin{itemize} \itemsep=0pt
\item $P[u]$ for $u \in [s]$ indicating that $u$ is an ``active'' index.
\end{itemize}
The clauses of~$\RREF(F,s)$ are very similar to those of~$\REF(F,s)$
with a few additional literals in each clause, and three additional
types of clauses. For easiness of future reference, we explicitly
listed the new set of clauses in Table~\ref{fig:rrefclauses} of
Appendix~\ref{sec:appendix}. In words,~$\RREF(F,s)$ says that the
lines indexed by its at most~$s$ active indices describe a Resolution
refutation of~$F$, and it does not put any restriction on the structure
of the lines on inactive indices. 

First we prove the lower bound:

\begin{lemma} \label{lem:lowerbound}
There is an integer~$n_0 \geq 0$ such that for all integers~$n$
and~$w$ with~$n\ge n_0$ and~$20\le w\le 2^n/(13n)$ and every
unsatisfiable CNF formula~$F$ with~$n$ variables, every Resolution
refutation of~$\RREF(F,13nw)$ has length bigger than~$2^{2w/5}$.
\end{lemma}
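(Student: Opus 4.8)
The plan is to follow the random-restriction strategy of Dantchev and Riis, reducing a length lower bound for $\RREF(F,s)$ with $s = 13nw$ to the index-width lower bound for $\REF(F,s')$ already established in Lemma~\ref{lem:indexwidthlowerbound}. First I would suppose, toward a contradiction, that $\RREF(F,13nw)$ has a Resolution refutation $\Pi$ of length at most $2^{2w/5}$. The key idea is that the relativization variables $P[u]$ act as ``switches'': a random restriction that keeps only a random subset $A \subseteq [s]$ of the indices active (setting $P[u] = 1$ for $u \in A$ and $P[u] = 0$ otherwise), and that sets the structure variables on the inactive indices in a harmless way, should with positive probability turn $\Pi$ into a Resolution refutation of something essentially isomorphic to $\REF(F,s')$ for $s' = |A|$, while simultaneously killing every wide clause of $\Pi$. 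Here one must choose the expected size of $A$ to be around $6nw$ (to meet the hypothesis $2^n \ge s' \ge 6nw$ of Lemma~\ref{lem:indexwidthlowerbound}, after absorbing constants), and use a union bound over the at most $2^{2w/5}$ clauses of $\Pi$ to argue that, with positive probability, no clause of width $\ge w$ survives: a clause mentioning $\ge w$ indices survives the random choice of $A$ only with probability exponentially small in $w$, and $2^{2w/5}$ times that is still $< 1$ for the right constants.

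The second step is to make precise the sense in which the restricted refutation refutes $\REF(F,s')$. After applying the restriction, the surviving clauses are over the structure variables indexed by $A$; one identifies $A$ order-preservingly with $[s']$ and checks that, under the clause lists in Tables~\ref{fig:refclauses} and~\ref{fig:rrefclauses}, the restricted clauses of $\RREF(F,13nw)$ become exactly (weakenings of) the clauses of $\REF(F,s')$ — this is where the ``three additional types of clauses'' and the extra literals $P[\cdot]$ in $\RREF$ must be inspected case by case, confirming that once the $P$-variables are fixed, each clause either trivializes or collapses to its $\REF$ counterpart. The restriction $\Pi\res\alpha$ is then, by the general fact recorded in Section~\ref{sec:preliminaries}, a Resolution refutation of $\REF(F,s')\res\alpha$; arranging that $\alpha$ touches no structure variable with index in $A$, this is a refutation of $\REF(F,s')$ itself, of length at most $2^{2w/5}$ and, by the survival property, of index-width $< w$. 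This contradicts Lemma~\ref{lem:indexwidthlowerbound} provided $2^n \ge s' \ge 6nw$, which the choice of parameters (and $n \ge n_0$, $w \le 2^n/(13n)$) guarantees for typical $A$.

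The main obstacle I expect is the interaction between the random restriction and the built-in linear order on $[s]$ — exactly the point that Dantchev and Riis flagged as making their technique fail with a built-in order. Concretely: when we drop a random subset of indices, the relations $L, R$ on surviving lines may point to now-deleted lines, so the restriction must also repair these pointers, and it must do so respecting $L(u), R(u) < u$ and the structure axioms (R1)--(R8) lifted to $\REF$. Getting a clean probabilistic statement therefore requires either choosing $A$ to be a union of whole ``blocks'' compatible with a full-tree-like layout (mirroring the block decomposition $B_0,\dots,B_{n-k}$ used in Lemma~\ref{lem:indexwidthlowerbound}), so that pointers stay internal, or a more careful definition of the restriction that simultaneously reassigns $L,R,V,I$ on the boundary lines; balancing this structural rigidity against the need for enough randomness to kill all wide clauses is the delicate part, and I would expect the bulk of the proof to be spent setting up exactly the right distribution on restrictions and verifying that a surviving clause of index-width $\ge w$ is improbable while the reduction to $\REF(F,s')$ still goes through. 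The constants $13$, $20$, $2w/5$, $6nw$ are then just what falls out of making the union bound and the block counts fit together.
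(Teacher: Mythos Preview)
Your first two paragraphs match the paper's argument: a random restriction fixes the $P[u]$'s, the restricted axioms of $\RREF(F,t)$ become axioms of $\REF(F,A)$, and a union bound over the $\le 2^{2w/5}$ clauses of $\Pi$ kills everything of index-width $\ge w$, contradicting Lemma~\ref{lem:indexwidthlowerbound}. Two points to sharpen. First, the variables on inactive indices are set \emph{uniformly at random}, not deterministically ``harmlessly''; this is precisely what makes each literal mentioning some index satisfied with probability $\ge 1/4$ (probability $1/2$ that the index is inactive, then probability $1/2$ that the right bit is chosen), giving the $(3/4)^w$ survival bound from which the constants $2/5$ and $w\ge 20$ fall out. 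Second, you also need the event $t\in A$ (probability $1/2$) so that the single clause (A24) is satisfied; this, the Chernoff bound for $|A|\ge 6nw$, and the $<1/4$ bound for the wide-clause event combine by a union bound.

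Your third paragraph, however, misidentifies the obstacle. There is no pointer repair and no block structure on~$A$. The paper adds exactly one further deterministic step to the restriction: for $u\in A$ and $v\notin A$, set $L[u,v]=R[u,v]=0$. After this, (A3) and (A4) restrict to $L[u,0]\vee\bigvee_{v\in A}L[u,v]$ and its $R$-analogue, clauses (A22)--(A23) are satisfied outright, and every remaining axiom restricts to $1$ or to the matching $\REF(F,A)$ axiom by inspection. No reassignment of $V,I,L,R$ on ``boundary lines'' is needed, because we are not building a satisfying assignment for the restricted formula---we only need each $\RREF$ axiom to become a $\REF(F,A)$ axiom or to vanish. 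Choosing $A$ as a union of blocks is unnecessary and would only complicate the per-index independence that the $(3/4)^w$ calculation relies on. The built-in linear order causes no trouble in this lemma; that difficulty was already absorbed inside the proof of Lemma~\ref{lem:indexwidthlowerbound} (via the fixed bijection on the top block $B_0$), which the present argument invokes as a black box.
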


\begin{proof}  Let 
$F$ be an unsatisfiable CNF with~$n$ variables and~$m$ clauses
  and~$20\le w\le 2^n/(13n)$. Assume~$\Pi$ is a Resolution refutation
  of~$\RREF(F,t)$ of length~$\ell\le 2^{2w/5}$ where~$t := 13nw$.  We
  derive a contradiction assuming at various places that~$n$ is large
  enough and this determines the constant~$n_0$. It will be clear that
  it does not depend on~$F$ or~$w$.
  
  We define a random restriction~$\rho$ to (a subset of) the variables
  of~$\RREF(F,t)$ by the following random experiment:
\begin{enumerate}\itemsep=0pt
\item independently for every $u \in [t]$, map $P[u]$ to $1$ or $0$
each with probability $1/2$; 
\item let $A$ be the set of $u \in [t]$ for which $P[u]$ is mapped to $1$;
\item for every $u\in A$ and $v\in[t]\setminus A$, 
map both $L[u,v]$ and $R[u,v]$ to 0;
\item independently for every $u\in[t]\setminus A$ and every variable that
  mentions $u$, map the variable to $1$ or $0$ each with
  probability~$1/2$.
\end{enumerate}
A literal that mentions~$u\in[t]$ evaluates to 1 under~$\rho$ with
probability at least~$1/4$, namely in the event that~$P[u]$ is mapped
to~$0$ in step~1 and the right value is chosen in step~4. Thus, the
probability that a clause of index-width at least~$w$ is not satisfied
by~$\rho$ is at most~$(3/4)^{w}$.  By the union bound, the probability
that~$\Pi\res\rho$ contains a clause of index-width at least~$w$ is at
most~$\ell\cdot(3/4)^{w}$, which is strictly less than~$1/4$
for~$\ell\le 2^{2w/5}$ (here we use that~$w\ge 20$).  Note the clauses
of~$\Pi\res\rho$ use variables of~$\REF(F,t)$, so index-width is
well-defined.


The cardinality of the random subset~$A$ is a symmetric binomial
random variable with expectation~$t/2 = 13nw/2$. By the Chernoff bound
there is a real~$\epsilon>0$, independent of~$F$ and~$w$, such
that~$|A| < 6nw$ with probability at most~$2^{-\epsilon nw}$. For
large enough~$n$ this is strictly less than~$1/4$.  Further,~$P[t]$ is
mapped to~$1$ with probability~$1/2$. Thus, for large enough~$n$, by
the union bound, there exists a restriction~$\rho$ in the support of
the above distribution, say, with associated set~$A\subseteq [t]$,
such that:
\begin{enumerate}\itemsep=0pt
\item[(i)] $\Pi\res \rho$ has index-width smaller than $w$; 
\item[(ii)] $|A|\ge 6nw$; 
\item[(iii)] $\rho$ maps $P[t]$ to $1$, so $t \in A$.
\end{enumerate}
By (iii),~$\rho$ satisfies (A24). Also,~$C\res \rho=1$ for~$C$ a
clause of type (A22) or (A23) because this holds for every restriction
in the support of the distribution.

Let~$s = |A|$ and let~$\REF(F,A)$ be defined as~$\REF(F,s)$ except
that we use~$A$ instead~$[s]$ as index set, with~$t$ in the role
of~$s$.  More precisely,~$\REF(F,A)$ is obtained from~$\REF(F,s)$ by a
copy of variables: a variable is replaced by the variable
(of~$\REF(F,t)$) obtained by changing its index~$u\in[s]$
(and~$v\in[s]$) to the~$u$-th (and the~$v$-th) member of~$A$.
 
We claim that for every clause~$C\in \RREF(F,t)$ we either have~$C\res
\rho=1$ or~$C\res \rho\in \REF(F,A)$. We already checked this for
(A22)-(A24) and are left with (A1)-(A21).  For example, if~$C$ is a
clause of type~(A3), then~$C\res\rho=1$ if~$u\notin A$, and
otherwise~$C\res\rho=L[u,0]\vee\bigvee_{v\in A }L[u,v]$ is a clause
in~$\REF(F,A)$. The case that~$C$ is of type (A4) is similar. The
remaining cases are obvious.  Thus,~$\Pi\res\rho$ is a Resolution
refutation of~$ \REF(F,A)$ of length at most~$\ell$.  By~(i) and~(ii),
if~$n$ is large enough, this contradicts
Lemma~\ref{lem:indexwidthlowerbound} (note~$2^n\ge t\ge s$).
  \end{proof}




The next lemma gives a polynomial upper bound on the length of
Resolution refutations of $\RREF(F,s)$ when $F$ is satisfiable. In
fact, its second statement gives an upper bound that is possibly
sublinear in the size of $\RREF(F,s)$. This second statement is not
needed to prove Theorems~\ref{thm:main} and \ref{thm:approx}.

\begin{lemma} \label{lem:upperbound} There is a polynomial $p(s,n,m)$ 
such that for all integers $n,m,s \geq 1$  and every
satisfiable CNF formula~$F$ with~$n$ variables and $m$ clauses, there
  exists a Resolution refutation of $\RREF(F,s)$ of length at most $p(s,n,m)$. In fact, $p(s,n,m)\in O((snm)^2)$.
\end{lemma}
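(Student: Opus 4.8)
The plan is to exhibit an explicit Resolution refutation of $\RREF(F,s)$ driven by a fixed satisfying assignment $\beta$ of $F$. The intuition, following Pudl\'ak, is that $\RREF(F,s)$ asserts that the lines indexed by the active indices form a refutation of $F$; if $F$ has a model $\beta$, then along any such alleged refutation one can walk from the (allegedly empty) final line backwards, at each step using $\beta$ to pick out a literal of the current clause that $\beta$ satisfies, and maintaining the invariant ``the current active line contains a $\beta$-true literal.'' This invariant is vacuously violated at the final active line, which $\RREF$ forces to be empty, so the walk must get stuck — and getting stuck is exactly a contradiction that Resolution can derive in few steps. Concretely, I would introduce, for each active index $u$, the ``bad-event'' clause $B_u$ asserting that line $u$ is active and contains some $\beta$-true literal $X_i^{(\beta(X_i))}$, i.e. $B_u := \bar P[u] \vee \bigvee_{i\in[n]} D[u,i,\beta(X_i)]$, and show that $\RREF(F,s)$ together with the negations of all these $B_u$ (which is what we are trying to refute) quickly derives a contradiction; equivalently, derive $B_u$ for the final active index from the formula, then contradict (A24) and (A8)-type clauses forcing that line empty.

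The key steps, in order: (1) Fix a satisfying assignment $\beta$ of $F$; record that for every clause $C_j$ of $F$ there is some literal $X_i^{(\beta(X_i))}$ in $C_j$, which will be used with the (A6)-type clauses ($\bar I[u,j]\vee D[u,i,b]$ for $X_i^{(b)}$ in $C_j$) relativized by $P[u]$. (2) For each $u\in[s]$, derive in $O(nm)$ steps the clause $C_u := \bar P[u]\vee \bar L[u,0] \vee \bigvee_{i\in[n]} D[u,i,\beta(X_i)] \vee (\text{indicator that } u \text{ is a resolvent})$; more cleanly, split on whether $I[u]\ne 0$ or $V[u]\ne 0$ (using the relativized (A1)/(A2) clauses), handle the initial-clause case via the (A6)-relativized clauses and the fact that $\beta\models C_j$, and handle the resolvent case by showing $D_u$ inherits a $\beta$-true literal from whichever of $D_{L(u)}, D_{R(u)}$ supplies it — but since $L(u),R(u)$ are themselves indices that may or may not be active, one must use the $\RREF$ clauses (A3)/(A4)-relativized stating $P[u]\to (L[u]\text{ picks an active index})$, and the relativized (R4)/(R5) clauses. (3) Chain these: the clause $C_u$ for a resolvent index $u$ is resolved against $C_{L(u)}$ and $C_{R(u)}$ (after the appropriate $L[u,v], R[u,v]$ literals are cut), pushing the ``$\beta$-true literal exists'' property down toward leaves, and at leaves closing via step (2). (4) The final active index $s$ (active by convention, or whichever index $\RREF$ designates as carrying the empty clause via the relativized (R8)) has $D[s,i,b]$ forced false for all $i,b$ by the (A21)-type clauses, while $P[s]$ is forced true; combining with the derived $C_s$ yields the empty clause. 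Counting: each of the $s$ indices contributes $O(nm)$ initial derivation steps plus $O(n)$ chaining cuts, and there are $O(s)$ additional bookkeeping cuts on the $L,R$ variables, each clause of width $O(n+m+s)$, giving total length $O(snm) \cdot O(1)$ lines; a slightly more careful accounting of the width-versus-number tradeoff gives the claimed $O((snm)^2)$, and the sublinear-in-$r(\RREF(F,s))$ refinement comes from noting $r(\RREF(F,s))$ is already $\Theta(s^2 + sn + sm)$-ish so that $O((snm)^2)$ can be $o(r)$ in regimes where $s$ dominates.

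The main obstacle I anticipate is the relativization bookkeeping: because $L(u)$ and $R(u)$ are arbitrary indices that need not be active, the naive induction ``$D_{L(u)}$ has a $\beta$-true literal'' is not directly available, and one must instead use the specific relativized axioms of $\RREF$ — the ones guaranteeing that an active resolvent index points (via $L[u,\cdot], R[u,\cdot]$) to indices that are themselves active and whose lines genuinely witness the resolution step — to carry the invariant across. This is precisely where, as the proof idea paragraph flags, ``the linear order is crucial'': the clauses chain in the order of the indices, and one resolves $C_u$ only against $C_v$ for $v < u$, so the derivation is laid out respecting $[s]$ and terminates. The rest (the per-index $O(nm)$ derivations handling the initial-clause and the tautology/functionality axioms) is routine case analysis over the clause types (A1)--(A24), and the polynomial bound is a direct count once the structure is in place.
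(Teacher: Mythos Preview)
Your proposal is correct and follows essentially the same approach as the paper: define the clauses $B_u = \bar P[u]\vee\bigvee_i D[u,i,\beta(X_i)]$ (the paper calls them $\textit{True}(u)$), derive them by forward induction on $u=1,\ldots,s$ using (A19) for the initial-clause case and (A15)--(A18), (A20), (A22), (A3), (A13) for the resolvent case, then finish via (A21) and (A24). One minor clarification: your steps (2)--(3) cannot be separated, since handling the resolvent case for $u$ already needs all $B_v$ with $v<u$, and you must cut against every such $v$ (via the big disjunction (A3)), so the per-index cost is $\Theta(n^2 u + ns + m)$ rather than $O(nm)$; the total still lands inside $O((snm)^2)$ once the axiom count is added.
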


\begin{proof} 
Let $F$ be a satisfiable CNF with variables 
$X_1,\ldots,X_n$ and clauses $C_1,\ldots,C_m$.
Let~$\alpha : \{X_1,\ldots,X_n\} \rightarrow
\Bool$ 
be an assignment that satisfies $F$. 
We derive the clauses
\begin{equation*}
\textit{True}(u)\ :=\ \bar{P}[u] \vee D[u,1,\alpha(X_1)] \vee \cdots \vee D[u,n,\alpha(X_n)]
\end{equation*}
for $u = 1,2,3,\ldots,s$ in order. Then $n$ many cuts with (A21) and
one cut with (A24) yield the empty clause.

First, we derive, for all $u\in[s]$ and $j\in[m]$, as $sm$ many weakenings
of clauses of~$\RREF(F,s)$, the auxiliary clauses
\begin{equation*}\label{eq:IT}
A_0(j,u)\ :=\ \bar I[u,j]\vee\textit{True}(u).
\end{equation*}
Since $\alpha$ satisfies $F$ we can choose for every $j\in[m]$ some
$i_j\in[n]$ such that $X_{i_j}^{(\alpha(X_{i_j}))}$ appears
in~$C_j$. Then $\bar P[u]\vee\bar I[u,j]\vee D[u,i_j,\alpha(X_{i_j})]$
is a clause of $\RREF(F,s)$, namely (A19).
But~$A_0(j,u)$ is a weakening of this.

We derive $\textit{True}(u)$ for $u=1$ through $s+m+2$ many
cuts. Through a sequence of $s$ many cuts, starting at (A4) and using
(A14) for all $v\in[s]$, get $\bar P[u]\vee R[u,0]$. Cut this with~(A12)
to get $\bar P[u]\vee\bar I[u,0]$. Cut this with (A2), followed by a
sequence of $m$ many cuts with all~$A_0(j,u)$ for $j\in[m]$ to get $\textit{True}(u)$.

Now assume $u>1$ and $\textit{True}(v)$ have been derived for all
$v<u$. First, we derive for  every $i\in[n]$ the auxiliary clause 
\begin{equation*}
A_1(i,u):=\left\{
\begin{array}{ll}
\bar V[u,i]\vee   L[u,0]\vee\textit{True}(u)&\textup{if }\alpha(X_i)=1\\
\bar V[u,i]\vee   R[u,0]\vee\textit{True}(u)&\textup{if }\alpha(X_i)=0.
\end{array}\right.
\end{equation*}
We treat the case $\alpha(X_i)=1$, the case $ \alpha(X_i)=0$ is analogous. Let $v\in[u-1]$.
 Cut (A15) with $\bar P[v]\vee \bar
D[v,i,0]\vee\bar D[v,i,1]$ of type (A20) on $D[v,i,0]$ to get
\begin{equation*}
\bar P[u]\vee\bar P[v]\vee\bar L[u,v]\vee \bar V[u,i]\vee \bar D[v,i,\alpha(X_i)].
\end{equation*}
Cut this with $\textit{True}(v)$ on $ D[v,i,\alpha(X_i)]$ to get
\begin{equation*}
\textstyle \bar P[u]\vee\bar P[v]\vee\bar L[u,v]\vee \bar V[u,i]\vee \bigvee_{ i'\in[n]\setminus\{i\}}D[v, i',\alpha(X_{ i'})].
\end{equation*}
Cut this with (A17) on 
$D[v, i',\alpha(X_{ i'})]$ for every $i'\in[n]\setminus\{i\}$, and then with with (A22) on~$P[v]$ to get
\begin{equation}\label{eq:aa1}
\textstyle \bar P[u]\vee\bar L[u,v]\vee \bar V[u,i]\vee \bigvee_{ i'\in[n]\setminus\{i\}}D[u, i',\alpha(X_{ i'})].
\end{equation}
Now cut (A3) with this formula for all $v\in[u-1]$,
and with (A13) for all $u\le v\le s$ to get the following subclause of
 $A_1(i,u)$
\begin{equation*}
\textstyle \bar P[u]\vee L[u,0]\vee \bar V[u,i]\vee \bigvee_{ i'\in[n]\setminus\{i\}}D[u, i',\alpha(X_{ i'})].
\end{equation*}
For every $v\in[u-1]$, the clause \eqref{eq:aa1} is derived with $n+2$
cuts. Thus, $A_1(i,u)$ is derived with~$(n+2)(u-1)+s$ many cuts. Doing this for all $i\in[n]$ amounts to 
$n(n+2)(u-1)+ns$ many cuts.

Having derived the auxiliary clauses $A_1(i,u)$ we now derive
$\textit{True}(u)$ in a sequence of~$n+m+4$ cuts. In a sequence of
$n$ many cuts, cut (A1) with $A_1(i,u)$ for all $i\in[n]$, to get
\begin{equation*}
V[u,0]\vee  L[u,0]\vee  R[u,0]\vee\textit{True}(u).
\end{equation*}

Cut with (A9) on $V[u,0]$, then with (A11) on $L[u,0]$, and then with
(A12) on $R[u,0]$ to get
$
\bar I[u,0]\vee \textit{True}(u).
$
Cut (A2) with this and then with $A_0(j,u)$ for all $j\in[m]$ in
sequence to get $ \textit{True}(u)$ as desired.

In total, the refutation uses
\begin{equation*}\label{eq:cuts}
\textstyle
 (s+2+m)+ (n+1) + \sum_{u=2}^s \big( n(n+2)(u-1)+ns + (n+m+4)\big) 
\end{equation*}
many cuts: the first term counts the cuts in the derivation of
$\textit{True}(1)$, the second term counts the cuts to get the empty clause from~$\textit{True}(s)$, and
each term in the big sum counts the cuts in the
derivation of $\textit{True}(u)$ for $u = 2,\ldots,s$.
The length of the refutation is bounded by the number of cuts plus the
$sm$ weakenings to get the $A_0(j,u)$'s plus the number of clauses of $\RREF(F,s)$. But, 
in fact, the clauses (A7) and (A8) are not used by the given
refutation, and $\REF(F,s)$ has at most
$O((snm)^2)$ many other clauses.
%
\end{proof}

\begin{remark} 
In the proof of the upper bound Lemma~\ref{lem:upperbound}, the
built-in linear order in the definition of $\RREF$ plays a crucial
role. This refers to the side conditions $u \leq v$ in clauses~(A13)
and~(A14) of the definition of RREF in Table~\ref{fig:rrefclauses} of
Appendix~\ref{sec:appendix}. Indeed, it is not hard see that if the
linear order were not built-in but interpreted, through new
propositional variables~$O[u,v]$ and its corresponding clause axioms,
then the resulting version of $\RREF(F,s)$ would be exponentially hard
for resolution independently of the satisfiability or unsatisfiability
of $F$. This follows from an ``infinite model argument'' similar to
the proof of the main theorem in~\cite{dr}.
\end{remark}

\section{Proofs of the hardness results} \label{sec:hardness}


%
%


In this section we derive Theorems~\ref{thm:main} and \ref{thm:approx}
stated in the Introduction.

\begin{proof}[Proof of Theorem~\ref{thm:approx}]
It suffices to define
  $G$ on 3-CNF formulas $F$ with a sufficiently large number of
  variables $n$. Note $m\le 8n^3$ for $m$ the number of clauses of
  $F$. We set
\begin{equation*}
G(F):=\RREF(F,13n^2).
\end{equation*}
Note  $G(F)$
has size between $n^{1/q}$ and $n^q$ for some constant $q >
0$. Thus,~(a) follows from the first statement of
Lemma~\ref{lem:upperbound} for some constant $c > 0$, and (b) follows
from Lemma~\ref{lem:lowerbound} for~$w:=n$ and some constant $d > 0$
(note that $20\le w\le 2^n/(13n)$ for sufficiently large~$n$).
\end{proof}

Our main result, Theorem~\ref{thm:main}, is implied by the more
general statement below.  We say that Resolution is {\em automatizable
  in time~$t$} if there is an algorithm that, given an
unsatisfiable~CNF formula~$F$,
computes some Resolution refutation of~$F$ in
time~$t(r(F)+s(F))$.  Recall that a
function~$t : \nats \rightarrow \nats$ is {\em time-constructible} if
there is an algorithm that given~$1^n$ (the string of~$n$ many 1's)
computes~$1^{t(n)}$ in
time~$O(t(n))$. We say that~$t$ is {\em subexponential} if~$t(n)\le
2^{n^{o(1)}}~$.

\begin{theorem}
  Let $t : \nats \rightarrow \nats$ be time-constructible,
  non-decreasing and subexponential. If Resolution is automatizable in
  time $t$, then there are polynomials $q(n)$ and $r(n)$ and an
  algorithm that, given a 3-CNF formula $F$ with $n$ variables,
  decides in time~$O(t(r(n))+q(n))$ whether~$F$ is satisfiable.
\end{theorem}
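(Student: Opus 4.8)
The plan is to use the automating algorithm as a decision procedure for 3-SAT via the gap produced by Theorem~\ref{thm:approx}. First I would recall from Theorem~\ref{thm:approx} the polynomial-time computable map $G$ sending a 3-CNF $F$ with $n$ variables to a CNF $G(F) = \RREF(F,13n^2)$, together with the constants $c,d>0$: writing $r = r(G(F))$ and $s = s(G(F))$, we have $s < r^{c}$ if $F$ is satisfiable, and $s > 2^{r^{1/d}}$ if $F$ is unsatisfiable. Since $G(F)$ is computable from $F$ in polynomial time, $r = r(G(F))$ is bounded by a polynomial $r(n)$ in $n$ (and also bounded below by a polynomial, but only the upper bound matters here). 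I would fix such a polynomial $r(n)$ with $r(G(F)) \le r(n)$ for all 3-CNF $F$ with $n$ variables; this accounts for the $r(n)$ named in the statement.

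The algorithm deciding 3-SAT then proceeds as follows, on input a 3-CNF formula $F$ with $n$ variables. First check whether $F$ is trivially satisfiable or the number of variables is below the threshold required by Theorem~\ref{thm:approx}; if $n$ is small, decide $F$ by brute force in constant time (this is absorbed into the additive polynomial term). Otherwise, compute $G(F)$ in polynomial time. Next, run the assumed automating algorithm on $G(F)$, but only for $t(r(n))$ steps, using time-constructibility of $t$ and the polynomial bound $r(G(F)) \le r(n)$ to know when to stop; since $t$ is non-decreasing and $r(G(F)) \le r(n)$, we have $t(r(G(F))) \le t(r(n))$, so $t(r(n))$ steps suffice whenever the automating algorithm would actually halt in time $t(r(G(F)) + s(G(F)))$. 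If within $t(r(n))$ steps the automating algorithm outputs a Resolution refutation $\Pi$ of $G(F)$, count its length $\ell = |\Pi|$ and accept $F$ (declare satisfiable) iff $\ell < r(n)^{c}$, and otherwise reject; if the automating algorithm does not halt within $t(r(n))$ steps, reject $F$ (declare unsatisfiable).

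For correctness I would argue two cases. If $F$ is satisfiable, then $G(F)$ is unsatisfiable (it always is, regardless of $F$ — $\RREF$ is a contradiction), and $s(G(F)) < r^{c} \le r(n)^c$; since $t$ is subexponential and $r(G(F))+s(G(F)) \le r(n) + r(n)^c$ is polynomially bounded in $n$, the automating algorithm halts within time $t(r(G(F))+s(G(F))) \le t(r(n) + r(n)^c)$, which is at most $2^{n^{o(1)}}$ and in particular within our budget after adjusting the budget to $t$ of a slightly larger polynomial — so I would actually set the running budget to $t(r'(n))$ for $r'(n) := r(n) + r(n)^{c}$ (still a polynomial, renamed $r(n)$ in the statement), and observe the output refutation has length $\ell = s(G(F)) $ or at least the algorithm outputs some refutation, whose minimum possible length is $s(G(F)) < r(n)^c$; but the automating guarantee only bounds the \emph{time}, so more carefully: the returned $\Pi$ has length at most its running time $\le t(r'(n))$, which need not be below $r(n)^c$. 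Here is the main obstacle, and the fix: I must use that the automating algorithm runs in time polynomial in $r+s$ only in the \emph{subexponential-time} sense for the decision, but to read off satisfiability I compare $\ell$ against the gap. So instead the algorithm should accept iff the automating algorithm halts within the budget \emph{and} returns a refutation of length $< 2^{r(n)^{1/d}}$ (using the gap between $r^c$ and $2^{r^{1/d}}$, which holds for large $n$ since $r^c \ll 2^{r^{1/d}}$): if $F$ is satisfiable, $s(G(F)) < r^c < 2^{r^{1/d}}$ and the automatizer halts in subexponential (hence within-budget, after enlarging the polynomial inside $t$) time returning some refutation $\Pi$; but $\Pi$'s length could a priori exceed $2^{r^{1/d}}$ merely because the automatizer is slow relative to $s$. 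The clean resolution, which I would adopt, is: the automating algorithm, by definition, returns \emph{some} refutation in time $t(r+s)$; when $F$ is satisfiable, $s = s(G(F))$ is at most $r^c$, so $t(r+s) \le t(r + r^c)$, a quantity we can compute via time-constructibility; run exactly that many steps; the returned $\Pi$ then has length $\le t(r+r^c)$, which is still subexponential in $n$ but not obviously below the gap — however we do not need $\Pi$ short: we only need to detect the satisfiable case, and satisfiable is exactly the case where the automatizer \emph{halts within the budget} $t(r + r^c)$. When $F$ is unsatisfiable, $s(G(F)) > 2^{r^{1/d}}$, which for large $n$ exceeds $t(r + r^c)$ (since $t$ is subexponential and $r + r^c$ is polynomial, so $t(r+r^c) \le 2^{n^{o(1)}} \ll 2^{r^{1/d}} \le s$), and since the automatizer needs time at least $s$ just to write down a shortest refutation — more precisely its runtime is $\ge s(G(F))$ as it outputs a refutation of length $\ge s(G(F))$ — it \emph{cannot} halt within the budget. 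Hence: accept iff the automating algorithm halts within $t(r(n) + r(n)^c)$ steps. This is decidable in time $O(t(r(n)+r(n)^c) + q(n))$ where $q(n)$ bounds the time to compute $G(F)$, evaluate the constructible clock, and handle small $n$; setting the polynomial $r$ of the statement to $r(n) + r(n)^c$ and $q$ to this polynomial overhead completes the argument. The one subtlety to state carefully is that "halts and outputs a refutation" takes time $\ge$ the length of that refutation $\ge s(G(F))$, so in the unsatisfiable case the clock expires first — this is the crux and I would state it as a short lemma-free paragraph.
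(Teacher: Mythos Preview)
Your proposal, after several false starts about checking the length of the returned refutation, lands on exactly the paper's algorithm and argument: run the automatizer on $G(F)$ for $t(r(n)+r(n)^c)$ steps and accept iff it halts, using that in the satisfiable case $s(G(F))<r^c$ so the automatizer finishes within budget (by monotonicity of $t$), while in the unsatisfiable case $s(G(F))>2^{r^{1/d}}$ exceeds the subexponential budget so no refutation can be written down in time. The detours through comparing the output length to $r(n)^c$ or $2^{r(n)^{1/d}}$ are unnecessary and should be excised from the write-up; only the halting criterion is needed, exactly as the paper does.
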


\begin{proof}
  Assume that Resolution is automatizable in time $t$ and choose
  $c$, $d$ and $G$ from Theorem~\ref{thm:approx}. Let $q$ be as in the
  proof given above, so $G(F)$ has size at most $n^q$ for every 3-CNF
  formula~$F$ with a sufficiently large number $n$ of variables.

  Consider the following algorithm. Given a 3-CNF formula $F$ with $n$
  variables, compute the formula $G(F)$ and run the automating
  algorithm for up to $t(n^q+n^{qc})$ steps. If the algorithm returns
  a Resolution refutation within the allotted time, then output
  `satisfiable'. Else output `unsatisfiable'.

  It is clear that the algorithm runs in time
  $O(t(n^{q}+n^{qc})+q(n))$ for some polynomial~$q(n)$; here we use
  that $t$ is time-constructible.  It suffices to show that it is
  correct on 3-CNF formulas $F$ with a sufficiently large number of
  variables $n$.
  If $F$ is satisfiable, then by (a) of Theorem~\ref{thm:approx},
  $G(F)$ has a Resolution refutation of length at most $n^{qc}$, so
  the automating algorithm computes a refutation within the alloted
  time and we answer `satisfiable'; here we use that $t$ is
  non-decreasing.  If $F$ is unsatisfiable, then by (b) of
  Theorem~\ref{thm:approx}, no Resolution refutation of $G(F)$ has
  length at most $O(t(n^{q}+n^{qc}))$, so the automating algorithm
  cannot compute one within the allotted time and we answer
  `unsatisfiable'; here we use that $t$ is subexponential.
\end{proof}

\section{Concluding remarks} \label{sec:conclusions}

This final section contains the observation that, by a
padding argument, the constants~$c$
and~$d$ in Theorem~\ref{thm:approx} can be chosen arbitrarily close to
$1$ and $2$, respectively, and finishes with some questions.


\begin{proof}[Proof of Theorem~\ref{thm:approx} for $c=1+\epsilon$ and $d=2+\epsilon$]
Given $\epsilon>0$ we define $G(F,t)$ for a 3-CNF~$F$ and a natural
$t>0$ and verify~(a) and~(b) for $G(F):=G(F,t)$ assuming that~$t$ and
$n$ are sufficiently large; again, $n$ denotes the number of variables
of $F$. The meaning of ``sufficiently large'' for $t$ will depend only
on~$\epsilon$.  Write $w:= n^t$ and let~$G(F,t):=\RREF'(F,13nw)$ be
obtained from $\RREF(F,13nw)$ by deleting the clauses of type (A7)
and~(A8). As has been noted in the proof, Lemma~\ref{lem:upperbound}
holds true for $\RREF'(F,13nw)$ instead $\RREF(F,13nw)$. Since~$F$ has
at most $ 8n^3$ clauses, the size~$r(t)$ of $G(F,t)$ satisfies~
$n^{2t}\le r(t)\le n^{2t+c_0} $ for some constant $c_0 > 0$, a number
independent of~$F$ and~$t$.  By Lemma~\ref{lem:upperbound},~$s(G(F,t))
< n^{2t+c_1}$ for some constant $c_1 > 0$; but this is at most
$r(t)^{1+\epsilon}$ if $t > c_1/2\epsilon$.  By
Lemma~\ref{lem:lowerbound}, $s(G(F,t))>2^{2n^t/5}$, but this is more
than~$2^{r(t)^{1/(2+\epsilon)}}$ if~$t> c_0/\epsilon$.
\end{proof}

The reduction above falls short to rule out weak automatizability of
Resolution. For this we would need that $s(G(F))=\infty$ when $F$ is
unsatisfiable, i.e., that $G(F)$ is satisfiable, but this is unlikely
to hold for a polynomial time $G$ as it would put 3-SAT in co-NP. We
refer to \cite{ab} for a proof of equivalence of the different
characterizations of weak automatizability used here and in the
Introduction. The main problem left open by the current work is to
find more convincing evidence that Resolution is not weakly
automatizable.

On the more technical side, we would like to know whether the
formulas~$\REF(F,p(n))$ are hard for Resolution where~$F$ ranges over
unsatisfiable CNF formulas with~$n$ variables and~$p$ is some fixed
polynomial. We conjecture that this is the case but we only succeeded
in establishing a width lower bound\footnote{Recently, M. Garl\'ik
  confirmed the conjecture~\cite{garlik}.}.  Of course, one can define
analogous formulas~$P\Ref(F,s)$ for any proof system~$P$.  For all we
know it could be that such formulas~$P\Ref(F,p(n))$ are hard for
strong proof systems~$P$ like Frege or Extended Frege. Of course, it
would be a major breakthrough to prove this, even under some plausible
computational hardness hypothesis. We refer to~\cite[Chapter
  27]{kraforce} for a discussion.

\appendix

\section{Formulas REF and RREF} \label{sec:appendix}

In this appendix we include the detailed lists of clauses of the
formulas $\REF$ and $\RREF$. Recall, we use bars to denote the negation of
the variables, e.g., $\bar L[u,v]$ denotes the negation of~$L[u,v]$.

{\small
\begin{longtable}{rlll}
(A1) & $V[u,0] \vee V[u,1] \vee \cdots \vee V[u,n]$ & & $u \in [s]$, \\
(A2) & $I[u,0] \vee I[u,1] \vee \cdots \vee I[u,m]$ & & $u \in [s]$, \\
(A3) & $L[u,0] \vee L[u,1] \vee \cdots \vee L[u,s]$ & & $u \in [s]$, \\
(A4) & $R[u,0] \vee R[u,1] \vee \cdots \vee R[u,s]$ & & $u \in [s]$, \\
(A5) & $\bar{V}[u,i] \vee \bar{V}[u,i']$ & & $u \in [s]$, $i,i'
  \in [n] \cup \{0\}$, $i \not= i'$, \\
(A6) & $\bar{I}[u,j] \vee \bar{I}[u,j']$ & & $u \in [s]$, $j,j' \in [m] \cup \{0\}$, $j \not= j'$, \\
(A7) & $\bar{L}[u,v] \vee \bar{L}[u,v']$ & & $u \in [s]$, $v,v' \in [s] \cup \{0\}$, $v \not= v'$, \\
(A8) & $\bar{R}[u,v] \vee \bar{R}[u,v']$ & & $u \in [s]$, $v,v' \in [s] \cup \{0\}$, $v \not= v'$, \\
(A9) & $\bar{I}[u,0] \vee \bar{V}[u,0]$ & & $u \in [s]$, \\
(A10)& $I[u,0] \vee V[u,0]$ & & $u \in [s]$, \\
(A11) & $\bar{I}[u,0] \vee \bar{L}[u,0]$ & & $u \in [s]$, \\
(A12) & $\bar{I}[u,0] \vee \bar{R}[u,0]$ & & $u \in [s]$, \\
(A13) & $\bar{L}[u,v]$ & & $u,v \in [s]$, $u \leq v$, \\
(A14) & $\bar{R}[u,v]$ & & $u,v \in [s]$, $u \leq v$, \\
(A15) & $\bar{L}[u,v] \vee \bar{V}[u,i] \vee D[v,i,0]$ & & $u,v \in [s]$,
$i \in [n]$, \\
(A16) & $\bar{R}[u,v] \vee \bar{V}[u,i] \vee D[v,i,1]$ & & $u,v \in [s]$,
$i \in [n]$, \\
(A17) & $\bar{L}[u,v] \vee \bar{V}[u,i] \vee \bar{D}[v,i',b] \vee D[u,i',b]$ & &
$u,v \in [s]$, $i,i' \in [n]$, $b \in \Bool$, $i' \not= i$, \\
(A18) & $\bar{R}[u,v] \vee \bar{V}[u,i] \vee \bar{D}[v,i',b] \vee D[u,i',b]$ & &
$u,v \in [s]$, $i,i' \in [n]$, $b \in \Bool$, $i' \not= i$, \\
(A19) & $\bar{I}[u,j] \vee D[u,i,b]$ & & $u \in [s]$, $j \in [m]$,
$X_i^{(b)} \in C_j$, \\
(A20) & $\bar D[u,i,0] \vee \bar D[u,i,1]$ & & $u \in [s]$, $i \in [n]$,\\
(A21) & $\bar{D}[s,i,b]$ & & $i \in [n]$, $b \in \Bool$. \\
& & & \\
\caption{Clauses of $\REF$.}
\label{fig:refclauses}
\end{longtable}
}
\noindent In the clauses of $\REF$, clauses (A1)-(A8) say $V$, $I$,
$L$ and $R$ are functions with the appropriate domains and ranges,
(A9)-(A10) express (R1), (A11)-(A12) express (R2), (A13)-(A14) express
(R3), (A15)-(A16) express (R4), (A17)-(A18) express (R5), (A19)
expresses (R6), (A20) expresses (R7), and (A21) expresses (R8).

\bigskip
{\small
\begin{longtable}{rlll}
(A1) & $\bar{P}[u] \vee V[u,0] \vee V[u,1] \vee \cdots \vee V[u,n]$ & & $u \in [s]$, \\
(A2) & $\bar{P}[u] \vee I[u,0] \vee I[u,1] \vee \cdots \vee I[u,m]$ & & $u \in [s]$, \\
(A3) & $\bar{P}[u] \vee L[u,0] \vee L[u,1] \vee \cdots \vee L[u,s]$ & & $u \in [s]$, \\
(A4) & $\bar{P}[u] \vee R[u,0] \vee R[u,1] \vee \cdots \vee R[u,s]$ & & $u \in [s]$, \\
(A5) & $\bar{P}[u] \vee \bar{V}[u,i] \vee \bar{V}[u,i']$ & & $u \in [s]$, $i,i'
  \in [n] \cup \{0\}$, $i \not= i'$, \\
(A6) & $\bar{P}[u] \vee \bar{I}[u,j] \vee \bar{I}[u,j']$ & & $u \in [s]$, $j,j' \in [m] \cup \{0\}$, $j \not= j'$, \\
(A7) & $\bar{P}[u] \vee \bar{L}[u,v] \vee \bar{L}[u,v']$ & & $u \in [s]$, $v,v' \in [s] \cup \{0\}$, $v \not= v'$, \\
(A8) & $\bar{P}[u] \vee \bar{R}[u,v] \vee \bar{R}[u,v']$ & & $u \in [s]$, $v,v' \in [s] \cup \{0\}$, $v \not= v'$, \\
(A9) & $\bar{P}[u] \vee \bar{I}[u,0] \vee \bar{V}[u,0]$ & & $u \in [s]$, \\
(A10)& $\bar{P}[u] \vee I[u,0] \vee V[u,0]$ & & $u \in [s]$, \\
(A11) & $\bar{P}[u] \vee \bar{I}[u,0] \vee \bar{L}[u,0]$ & & $u \in [s]$, \\
(A12) & $\bar{P}[u] \vee \bar{I}[u,0] \vee \bar{R}[u,0]$ & & $u \in [s]$, \\
(A13) & $\bar{P}[u] \vee \bar{L}[u,v]$ & & $u,v \in [s]$, $u \leq v$, \\
(A14) & $\bar{P}[u] \vee \bar{R}[u,v]$ & & $u,v \in [s]$, $u \leq v$, \\
(A15) & $\bar{P}[u] \vee \bar{P}[v] \vee \bar{L}[u,v] \vee \bar{V}[u,i] \vee D[v,i,0]$ & & $u,v \in [s]$,
$i \in [n]$, \\
(A16) & $\bar{P}[u] \vee \bar{P}[v] \vee \bar{R}[u,v] \vee \bar{V}[u,i] \vee D[v,i,1]$ & & $u,v \in [s]$,
$i \in [n]$, \\
(A17) & $\bar{P}[u] \vee \bar{P}[v] \vee \bar{L}[u,v] \vee \bar{V}[u,i] \vee \bar{D}[v,i',b] \vee D[u,i',b]$ & &
$u,v \in [s]$, $i,i' \in [n]$, $b \in \Bool$, $i' \not= i$, \\
(A18) & $\bar{P}[u] \vee \bar{P}[v] \vee \bar{R}[u,v] \vee \bar{V}[u,i] \vee \bar{D}[v,i',b] \vee D[u,i',b]$ & &
$u,v \in [s]$, $i,i' \in [n]$, $b \in \Bool$, $i' \not= i$, \\
(A19) & $\bar{P}[u] \vee \bar{I}[u,j] \vee D[u,i,b]$ & & $u \in [s]$, $j \in [m]$,
$X_i^{(b)} \in C_j$, \\
(A20) & $\bar{P}[u] \vee \bar D[u,i,0] \vee \bar D[u,i,1]$ & & $u \in [s]$, $i \in [n]$,\\
(A21) & $\bar{P}[s] \vee \bar{D}[s,i,b]$ & & $i \in [n]$, $b \in \Bool$, \\
(A22) & $\bar{P}[u] \vee\bar{L}[u,v] \vee P[v]$ & & $u \in [s]$, $v \in [s]$, \\
(A23) & $\bar{P}[u] \vee\bar{R}[u,v] \vee P[v]$ & & $u \in [s]$, $v \in [s]$, \\
(A24) & $P[s]$. & & \\
& & & \\
\caption{Clauses of RREF}
\label{fig:rrefclauses}
\end{longtable}
}
\noindent The clauses of $\RREF$ are the same as for $\REF$
  but we add to each clause the literals $\bar{P}[u]$ with $u\in[s]$
  mentioned by the clause. More precisely, $\bar P[u]$ is added to the
  clauses (A1)-(A14) and (A19) and (A20), both $\bar P[u]$ and $\bar P[v]$
  are added to the clauses (A15)-(A18), and $\bar P[s]$ is added to
  clause~(A21). Further, we add three additional types of clauses
  numbered (A22)-(A24).

\subsection*{Acknowledgments}
Both authors were partially funded by European Research Council (ERC)
under the European Union's Horizon 2020 research and innovation
programme, grant agreement ERC-2014-CoG 648276 (AUTAR). First author
partially funded by MICCIN grant TIN2016-76573-C2-1P (TASSAT3). We are
grateful to Ilario Bonacina and Michal Garlik for their very useful
comments on an earlier draft of this paper.


\begin{thebibliography}{00}

\bibitem{ar}
M. Alekhnovich and A. A. Razborov.
Resolution is not automatizable unless W[P] is tractable. SIAM Journal on Computing 38 (4): 1347-1363, 2008.

\bibitem{abmp} M. Alekhnovich, S. R. Buss, S. Moran and T. Pitassi. Minimum propositional proof length is NP-hard to linearly approximate. The Journal of Symbolic Logic 66: 171-191, 2001.

 \bibitem{ats}
A. Atserias.
The proof-search problem between bounded-width Resolution and bounded-degree semi-algebraic proofs.
16th International Conference on Theory and Applications of Satisfiability Testing (SAT'13), LNCS 7962, Springer, pp. 1-17, 2013.

\bibitem{ab}
A. Atserias and M. L. Bonet.
On the automatizability of Resolution and related propositional proof systems.
Information and Computation 189 (2): 182-201, 2004. 

\bibitem{am}
A. Atserias and E. Maneva.
Mean-payoff games and propositional proofs.
Information and Computation 209 (4): 664-691, 2011.

\bibitem{amu} A. Atserias and M. M\"uller. 
Partially definable forcing and bounded arithmetic.
Archive for Mathematical Logic 54 (1): 1-33, 2015.

\bibitem{atsnord}
A. Atserias, M. Lauria, and J. Nordstr\"om.
Narrow proofs may be maximally long.
ACM Transactions on Computational Logic 17 (3), 19:1-19:30, 2016.

\bibitem{bp}
P. Beame and T. Pitassi. 1996. Simplified and improved resolution lower bounds. In 37th Annual Symposium on Foundations of Computer Science (FOCS'96), IEEE Computer Society, pp. 274-282. 1996.

\bibitem{bpt} A. Beckmann, P. Pudl\'ak and N. Thapen.
Parity games and propositional proofs.
ACM Transactions on Computational Logic 15 (2), article 17, 2014.

 \bibitem{bw}
E. Ben-Sasson and A. Wigderson. Short proofs are narrow -- Resolution made simple. Journal of the  ACM 48 (2): 149-169, 2001. 


\bibitem{bpr}
M. L. Bonet, T. Pitassi and R. Raz. 
 On interpolation and automatization for Frege systems. SIAM Journal on Computing 29 (6): 1939-1967, 2000.

\bibitem{bdgmp}
 M. L. Bonet, C. Domingo, R. Gavald\`{a}, A. Maciel and T. Pitassi.
Non-automatizability of bounded-depth Frege proofs.
Computational Complexity 13 (1-2): 47-68, 2004.


\bibitem{buss}
 S. R. Buss.
    On G\"odel's theorems on lengths of proofs II: Lower bounds for recognizing k symbol provability.
    In P. Clote and J. Remmel (eds.), Feasible Mathematics II, 
     Birkh\"auser, pp. 57-90, 1995.

\bibitem{parityquasi}
C. S. Calude, S. Jain, B. Khoussainov, W. Li and F. Stephan.  Deciding parity games in quasipolynomial time. 49th Annual ACM SIGACT Symposium on Theory of Computing (STOC'17). ACM, pp. 252-263, 2017.

\bibitem{cheflugoe} Y. Chen and J. Flum.
On the complexity of G\"odel's proof predicate.
The Journal of Symbolic Logic 75 (1): 239-254, 2010. 

\bibitem{cgg} Y. Chen, M. Grohe and M. Gr\"uber.
On parameterized approximability. 2nd International Workshop on Parameterized and Exact Computation (IWPEC'06),
LNCS 4169, pp.109-120, 2006. 

\bibitem{chenlin} Y. Chen and B. Lin. The constant inapproximability of the parameterized dominating set problem.  
SIAM Journal on Computing 48 (2): 513-533, 2019.


\bibitem{cookreck} S. A. Cook and R. A. Reckhow.
The relative efficiency of propositional proof systems. Journal of  Symbolic Logic 44 (1): 36-50, 1979. 

\bibitem{dr}
S. Dantchev and  S. Riis.  On relativisation and complexity gap for Resolution-based proof systems. Computer Science Logic (CSL'03), LNCS 2803, pp. 142-154, Springer, 2003.

\bibitem{eick}
K. Eickmeyer, M. Grohe and M. Gr\"uber. Approximation of natural W[P]-complete minimisation problems is hard.
 23rd Annual IEEE Conference on Computational Complexity (CCC'08), College Park, MD, pp. 8-18, 2008.


\bibitem{flumgrohe} J. Flum and M. Grohe. Parameterized Complexity Theory. Springer, 2006.

\bibitem{garlik} M. Garl\'ik. Resolution lower bounds for refutation statements. 44th International Symposium on Mathematical Foundations of Computer Science (MFCS'19), LIPIcs 138, pp. 37:1-37:13, 2019.

\bibitem{groll}
J. Grollmann and A. L. Selman. Complexity measures for public-key cryptosystems.
 SIAM Journal on Computing 17 (2): 309-335, 1988.
 
 \bibitem{HP}
L. Huang and T. Pitassi. Automatizability and simple stochastic games. In International Colloquium on Automata, Languages and Programming (ICALP'11), LNCS 6755, pp. 605-617, 2011.

\bibitem{iwama} K. Iwama. Complexity of finding short resolution proofs. 
Mathematical Foundations of Computer Science (MFCS'97), LNCS 1295, pp. 309-318, 1997.

\bibitem{mertz}
I. Mertz, T. Pitassi and Y. Wei. Short proofs are hard to find. 46th International Colloquium on Automata, Languages and Programming (ICALP'19), LIPIcs 132, pp. 84:1-84:16, 2019.

\bibitem{krabd} J. Kraj\'i\v{c}ek. Lower bounds to the size of constant-depth propositional proofs. The Journal of Symbolic Logic 59 (1): 73-86, 1994.

\bibitem{krabuch} J. Kraj\'i\v{c}ek. 
Bounded Arithmetic, Propositional Logic, and Complexity Theory. 
Encyclopedia of Mathematics and Its Applications 60, Cambridge University Press, 1995.

\bibitem{kraint} J. Kraj\'i\v{c}ek.
Interpolation theorems, lower bounds for proof systems, and independence results for bounded arithmetic. The Journal of Symbolic Logic  62 (2): 457-486, 1997.

\bibitem{kraweak} J. Kraj\'i\v{c}ek.
On the weak pigeonhole principle, Fundamenta Mathematicae 170 (1-3): 123-140, 2001.

\bibitem{kraforce} J. Kraj\'i\v{c}ek.
Forcing with random variables and proof complexity. 
London Mathematical Society Lecture Note Series 382, Cambridge University Press, 2011.

 \bibitem{kraneu} J. Kraj\'i\v{c}ek. Proof Complexity. 
Encyclopedia of Mathematics and Its Applications 170, Cambridge University Press, 
Cambridge - New York - Melbourne, 2019. 


\bibitem{kp} J. Kraj\'i\v{c}ek and
 P. Pudl\'{a}k. Some consequences of cryptographical conjectures for $S^1_2$ and $EF$. Information and Computation 140 (1): 82-94, 1998.
 
 
 
 \bibitem{ms}
M. M\"uller and S. Szeider.
The treewidth of proofs.
Information and Computation 255 (1): 147-164, 2017.

\bibitem{urq} A. Urquhart. Von Neumann, G\"odel and complexity theory. Bulletin of Symbolic Logic (16) 4: 516-530, 2010.
 
\bibitem{pudlak} P. Pudl\'ak. On reducibility and symmetry of disjoint NP-pairs. Theoretical Computer Science 295: 323-339, 2003.

\bibitem{razNP} A. A. Razborov. On provably disjoint NP-pairs. Basic Research in Computer Science (BRICS) Report Series 1 (36), 1994.


\end{thebibliography}
\end{document}